\newcommand{\parder}[2]{\frac{\partial #1}{\partial #2}}
\newcommand{\parsecder}[3]{\frac{\partial^2 #1}{\partial #2\; \partial #3}}
\newcommand{\shp}[2]{N^{#1}_{#2}(\bx)}
\newcommand{\bzero}{\mathbf{0}}
\newcommand{\del}{\nabla}
\newcommand{\bff}{\boldsymbol{f}}
\newcommand{\bg}{\boldsymbol{g}}
\newcommand{\bk}{\boldsymbol{k}}
\newcommand{\bl}{\boldsymbol{l}}
\newcommand{\bq}{\boldsymbol{q}}
\newcommand{\br}{\boldsymbol{r}}
\newcommand{\bs}{\boldsymbol{s}}
\newcommand{\bu}{\boldsymbol{u}}
\newcommand{\bv}{\boldsymbol{v}}
\newcommand{\bx}{\boldsymbol{\textbf{x}}}
\newcommand{\bA}{\boldsymbol{A}}
\newcommand{\bB}{\boldsymbol{B}}
\newcommand{\bC}{\boldsymbol{C}}
\newcommand{\bH}{\boldsymbol{H}}
\newcommand{\bK}{\boldsymbol{K}}
\newcommand{\bKbar}{\bar{\bK}}
\newcommand{\bxbar}{\bar{\boldsymbol{x}}}
\newcommand{\bffbar}{\bar{\bff}}
\newcommand{\bR}{\boldsymbol{R}}
\newcommand{\bphi}{\boldsymbol{\phi}}
\newcommand{\bLam}{\boldsymbol{\Lambda}}
\newcommand{\ubarn}{\bar{u}}
\newcommand{\uhat}{\hat{\bu}}
\newcommand{\fhat}{\hat{\bff}}
\newcommand{\ghat}{\hat{\bg}}
\newcommand{\phihat}{\hat{\bphi}}
\newcommand{\phibar}{\bar{\phi}}
\newcommand{\mubar}{\bar{\mu}}
\newcommand{\omegbar}{\bar{\omega}}
\newcommand{\alpj}{\omegbar^{\alpha_{J}}}
\newcommand{\betj}{\omegbar^{\beta_{J}}}
\newcommand{\dalpjn}{\delta\omega^{\alpha_{J}}}
\newcommand{\dbetjn}{\delta\omega^{\beta_{J}}}
\newcommand{\alpjn}{\omega^{\alpha_{J}}}
\newcommand{\betjn}{\omega^{\beta_{J}}}
\newcommand{\ualp}{\ubarn^{2\alpha}}
\newcommand{\ubet}{\ubarn^{2\beta}}
\newcommand{\ualpha}{u^{2\alpha}}
\newcommand{\ubeta}{u^{2\beta}}
\newcommand{\ualphabeta}{u^{2(\alpha+\beta)}}
\newcommand{\ualpbet}{\ubarn^{2(\alpha+\beta)}}
\newcommand{\intomega}{\int_{\varOmega}}
\newcommand{\intR}{ \int_{\mathbb{R}^{3}}}
\newcommand{\dx}{\,d\bx}
\newcommand{\Kalpbet}{K(|\bx - \bx'|)}
\newtheorem{prop}{Proposition}[section]
\newtheorem*{remark}{Remark}
\definecolor{hellgruen}{rgb}{0.2,0.7,0.2}
\newcommand{\cn}{\color{black}}
\date{}
\begin{document}
\author{\it{Phani Motamarri\,\,$^a$,  Mrinal Iyer\,\,$^a$,  Jaroslaw Knap\,\,$^b$, Vikram Gavini\,\,$^{a}\footnote{Corresponding author}$}\\
 \normalsize $^a$ Department of Mechanical Engineering, University of Michigan, Ann Arbor, MI 48109, USA \\
 \normalsize $^b$ U.S. Army Research Labs, Aberdeen Proving Grounds, Aberdeen, MD 21001, USA}
\title{\LARGE{\textbf{Higher-order adaptive finite-element methods for orbital-free density functional theory}}}
\maketitle
\abstract{In the present work, we investigate the computational efficiency afforded by higher-order finite-element discretization of the saddle-point formulation of orbital-free density-functional theory. We first investigate the robustness of viable solution schemes by analyzing the solvability conditions of the discrete problem. We find that a staggered solution procedure where the potential fields are computed consistently for every trial electron-density is a robust solution procedure for higher-order finite-element discretizations. We next study the numerical convergence rates for various orders of finite-element approximations on benchmark problems. We obtain close to optimal convergence rates in our studies, although orbital-free density-functional theory is nonlinear in nature and some benchmark problems have Coulomb singular potential fields. We finally investigate the computational efficiency of various higher-order finite-element discretizations by measuring the CPU time for the solution of discrete equations on benchmark problems that include large Aluminum clusters. In these studies, we use mesh coarse-graining rates that are derived from error estimates and an {\it a priori} knowledge of the asymptotic solution of the far-field electronic fields. Our studies reveal a significant 100-1000 fold computational savings afforded by the use of higher-order finite-element discretization, alongside providing the desired chemical accuracy. We consider this study as a step towards developing a robust and computationally efficient discretization of electronic structure calculations using the finite-element basis.

}
\section{Introduction}
Electronic structure calculations have been successful in accurately predicting a wide range of material properties over the course of past few decades. These predictions range from the accurate description of phase transformations in various materials to the electronic and optical properties of nanostructures. The predictive capability of electronic structure calculations can be attributed to the fact that they are derived from many-body quantum-mechanics and incorporate much of the fundamental physics with little empiricism. One of the most popular and widely used electronic structure theories is the Kohn-Sham approach to density functional theory (DFT)~\cite{kohn}. It is based on the Hohenberg-Kohn theorem~\cite{Hohenberg} which states that there is a one-to-one correspondence between the ground-state electronic wave-function of a quantum-mechanical system with $N$ interacting electrons and the ground-state electron-density. Thus, based on the Hohenberg-Kohn result, the ground-state properties of any quantum-mechanical system can be described by an energy functional of the electron-density. While the existence of such an energy functional to describe the ground-state properties is known, its functional representation is unknown to date. The Kohn-Sham approach circumvents this challenge by considering an equivalent system of non-interacting electrons moving in an effective mean field governed by the electron-density. While, in principle, this non-interacting independent particle description is exact for ground-state properties, it is formulated in terms of an unknown exchange-correlation functional of the electron-density for which various approximate models are used. Further, the Kohn-Sham approach is computationally expensive for large scale simulations (thousands of electrons or more), since it requires an expensive evaluation of the kinetic energy functional that involves the computation of the $N$ independent single-electron wave-functions self-consistently, which scales as $O(N^3)$. In order to reduce the computational complexity of electronic structure calculations using DFT, approximate models for the kinetic energy density functionals have been proposed and used in electronic structure calculations (cf.~e.g.,~\cite{parr,Wang1992,Enrico,wang1,wang2}). This approach is referred to as orbital-free DFT where the ground-state energy is explicitly described as a functional of the electron-density. Various numerical studies have shown that the orbital-free approximation to DFT is reasonable for systems with electronic structure close to a free-electron gas (cf.~e.g.~\cite{wang1,wang2,Huang}), for e.~g. simple metals and Aluminum, but provides an inaccurate description for covalently bonded and ionic systems. Irrespective of whether one is using the Kohn-Sham or orbital-free approach, realistic simulations of large-scale material systems with DFT are still very demanding, and numerical algorithms which are robust, computationally efficient and scalable on parallel computing architectures are always desirable to enable simulations at larger scales and on more complex systems.

Plane-wave basis has been the most popular basis set used in DFT calculations to date as it renders an efficient computation of electrostatic interactions naturally through Fourier transforms~\cite{VASP,CASTEP,ABINIT,PROFESS}. However, a plane-wave basis often restricts the geometry of simulation domains to periodic systems, which is incompatible with the displacement fields produced by most crystalline defects. Further, a plane-wave basis provides a uniform spatial resolution which is computationally inefficient in the study of defects and non-periodic systems like molecules and clusters---a higher resolution is often desired to describe specific regions of interest, such as a defect-core, but a coarser resolution suffices elsewhere. Moreover, the non-locality of a plane-wave basis affects the scalability of computations on massive parallel computing architectures. Thus, there is an increasing thrust towards using real-space techniques for electronic structure calculations (cf.~\cite{Beck} and references therein for a comprehensive overview). Among the real-space techniques, many efforts have focused on developing the finite-element basis for electronic structure calculations as it can accommodate unstructured coarse-graining of the basis set, can handle complex geometries and boundary conditions, and is easy to parallelize. We refer to ~\cite{white,tsuchida1995,tsuchida1996,tsuchida1998,pask1999,pask2001,carlos2006,ofdft,qcofdft,Zhou2008,suryanarayana2010non,Lin,bylaska,lehtovaara} and references therein for a comprehensive overview. The convergence of the finite-element approximation using a variational notion of $\Gamma-$convergence has been shown for the case of orbital-free DFT (with Thomas-Fermi-von Weizsacker family of kinetic energy functionals)~\cite{ofdft} as well as for Kohn-Sham DFT~\cite{suryanarayana2010non}. Theoretical rates of convergence for the finite-element approximation of orbital-free DFT problem with pseudopotentials have been recently estimated in~\cite{ortner}. We also refer to~\cite{zhou,Cances} for a numerical analysis of a class of nonlinear eigenvalue problems, which includes the orbital-free DFT problem.

While the finite-element basis is more versatile in comparison to a plane-wave basis~\cite{pask1999}, prior studies have shown that the number of basis functions in a linear finite-element discretization required to achieve chemical accuracy in electronic structure calculations is very large~(cf. e.~g. \cite{bylaska, Hermannson}), thus increasing the computational cost of calculations. The use of higher-order finite-element discretizations was suggested as a possible solution to alleviate this degree of freedom disadvantage of linear finite-elements in electronic structure calculations---cf.~\cite{batcho} in the context of spectral elements and \cite{lehtovaara} in the context of hierarchical higher-order finite-element discretizations. However, using higher-order finite-element discretizations also increase the per basis function computational cost due to the use of higher-order numerical quadrature rules and an increase in the bandwidth of the matrix which affects the efficiency of the iterative solution schemes. Further, the rates of convergence for higher-order finite-element approximations of electronic structure theories, while numerically studied~\cite{pask1999,pask2005} and mathematically analyzed~\cite{ortner,zhou,Cances} in the context of pseudopotential calculations, are not completely understood in the case of all-electron calculations. Thus, in the present work, we conduct a systematic study to investigate the viability and the computational efficiency afforded by higher-order finite-element approximations in electronic structure calculations using orbital-free DFT as a model theory. We consider this investigation as an important step towards our final goal of studying the effectiveness of higher-order finite-element approximations in Kohn-Sham DFT.

A robust solution procedure and suitably graded finite-element meshes are key to examining the computational efficiency of higher-order finite-element discretizations in orbital-free DFT. To this end, we first investigate the robustness of viable solution schemes by analyzing the solvability conditions of the discrete orbital-free DFT problem and subsequently propose an {\it a priori} mesh adaption scheme to construct suitably graded finite-element meshes that will be used in the numerical investigations. We begin with the local real-space formulation as presented in~\cite{ofdft,bala}, where the problem of computing the ground-state energy is reformulated as a local saddle-point problem in the electronic fields comprising of the electron-density, electrostatic potential, and the kernel potentials describing the extended interactions in kinetic energy functionals. The discrete non-linear equations resulting from the finite-element discretization of the saddle-point problem can be solved using two different approaches: (i) a simultaneous solution of all fields using quasi-Newton methods; (ii) a staggered approach where the electron-density is solved using quasi-Newton methods, whereas the potential fields are computed consistently for every trial electron-density using iterative linear solvers. We investigate the robustness of these approaches by analyzing the solvability conditions for the linearized incremental problem corresponding to a single step within the iterative scheme of the nonlinear problem. We consider the scheme to be well-behaved if each iterative update is uniquely solvable. While the staggered approach guarantees solvability within the vicinity of the exact solution and is robust for any order of discretization of electronic fields (argument based on Proposition 3.3 of \cite{ortner}), the same is not true with a simultaneous solution procedure. The discrete equations in simultaneous scheme are not solvable if the dimension of the vector-space approximating electron-density far exceeds that of electrostatic potential and the Hessian matrix corresponding to electron-density is rank deficient to a certain degree which depends on the initial guess. Having examined the best solution procedure to solve the discrete equations (staggered approach), we next turn towards determining an optimal finite-element mesh for the orbital-free DFT problem. To this end, we derive error estimates in the energy as a function of the characteristic mesh-size distribution $h(\bx)$ and the electronic fields comprising of the electron-density, electrostatic potential and kernel potentials. We remark that error estimates have been rigorously derived in recent studies for the orbital-free DFT problem~\cite{ortner,zhou,Cances}, though the form of these estimates is not useful for developing mesh adaption schemes as these studies primarily focused on proving the convergence of the finite-element approximation and determining the convergence rates. We use the derived error estimates in conjunction with the mesh adaption scheme proposed in~\cite{radio} to determine the optimal coarse-graining rates for the finite-element discretizations in terms of the degree of the interpolating polynomial and the exact solution fields in the orbital-free DFT problem. As the exact solution fields are {\it a priori} unknown, we use the asymptotic solutions of the electronic fields away from the nuclei, which are {\it a priori} known, to determine the coarse-graining rates for the finite-element discretizations that will subsequently be used in the numerical studies. We note that the finite-element meshes thus determined are optimal away from nuclei, but not necessarily optimal in the vicinity of the nuclei. We remark that obtaining optimal meshes away from the nuclei is still very useful for efficiently resolving the vacuum in non-periodic calculations.

We next turn towards the numerical investigation of the saddle-point orbital-free DFT problem in the finite-element basis. The various finite-elements used in our study include tetrahedral elements up to second-order, hexahedral elements up to fourth-order as well as spectral finite-elements~\cite{patera1984spectral} of order three and four. We begin our study by computing the numerical rates of convergence for all the elements. These convergence studies are conducted on three benchmark problems comprising of: (i) Hydrogen atom, which represents a linear problem with a singular potential; (ii) Helium atom, which represents a non-linear problem with a singular potential; (iii) Aluminum atom, which represents a non-linear problem with a pseudopotential. In these studies, as well as those to follow, we use the meshes obtained from the proposed {\it a priori} mesh adaption scheme. These benchmark studies show rates of convergence in energy of $ O(h^{2k})$ for a finite-element whose degree of interpolation is $k$, which denote optimal rates of convergence determined by the error analysis of the orbital-free DFT problem~\cite{ortner,zhou,Cances}. We remark that these numerical studies show optimal rates of convergence even for singular Coulomb potentials (Hydrogen and Helium atom) which were not analyzed in~\cite{ortner,zhou,Cances}. To the best our knowledge, an error analysis of orbital-free DFT considering singular Coulomb potentials is an open question to date. We next conduct a systematic study to assess the computational efficiency afforded by higher-order finite-element discretizations. To this end, we use the same benchmark problems and measure the CPU time for the solution of the orbital-free DFT problem to various relative accuracies for all the finite-elements considered in this study. We find that higher-order elements are computationally more efficient, and can provide significant computational savings, especially in the regime of chemical accuracy. For instance, a staggering 100-1000 fold computational savings are realized with the use of higher-order elements resulting in a relative energy error of $10^{-4}$ in comparison to linear finite-elements with a much higher relative error of around $10^{-2}$. We also observe a 2-3 fold computational savings in using spectral finite-elements over conventional finite-elements of the same order. To further assess the performance of higher-order elements, we conducted simulations on a series of aluminum clusters ranging from 1 face-centered-cubic (fcc) unit cell to $5\times5\times5$ unit cells, and these studies also reflect a similar computational advantage afforded by the use of higher-order elements.

The remainder of the paper is organized as follows. Section~\ref{formulation} describes the saddle-point formulation of the orbital-free DFT problem and summarizes the key results from the analysis of solvability conditions of the linearized system of equations corresponding to the discrete orbital-free DFT problem, whose detailed analysis is provided in appendix~\ref{solve}. Section~\ref{meshadapt} presents the error estimates for the finite-element discretization of orbital-free DFT and uses these estimates to present an {\it a priori} mesh adaption scheme. Sections~\ref{NumericalImplementation} and \ref{NumericalResults} describe the numerical implementation of the method and present the numerical examples which demonstrate the computational efficiency afforded by higher-order finite-element discretizations in electronic structure calculations. We conclude with a short discussion in Section~\ref{concl}.

\section{Formulation}\label{formulation}
In this section, we present the variational formulation underlying the orbital-free DFT problem and a discretization of the formulation using a finite-element basis. We also discuss the main results from the analysis of solvability conditions corresponding to the discrete orbital-free DFT problem, whose detailed analysis is provided in appendix~\ref{solve}. If $N$ and $M$  denote the number of electrons and the number of atoms in a charge neutral system respectively, and $u = \sqrt{\rho}$ denotes the square-root electron-density such that $\int u^2 \dx = N$, then the energy of the system described by density functional theory is given by (cf. e.~g.~\cite{parr,finnis})
\begin{equation}\label{engy}
E(u,\bR) = T_{s}(u) + E_{xc}(u) + E_{H}(u)+E_{ext}(u,\bR) + E_{zz}(\bR),
\end{equation}
where  $\bR = \{\bR_1,\cdots,\bR_M\}$ is the position vector denoting the nuclear coordinates in the system; $T_s$ is the kinetic energy of the non-interacting electrons; $E_{xc}$ is the exchange correlation energy; $E_H$ denotes the classical electrostatic interaction energy of the electron-density also referred to as Hartree energy; $E_{ext}$ is the interaction energy with the external field, $V_{ext}$, induced by nuclear charges and $E_{zz}$ is the repulsive energy between nuclei. In orbital-free DFT, $T_s$ is modeled using explicit functional forms of electron-density, and we employ, in our study, the Thomas-Fermi-von Weizscaker(TFW) family of functionals~\cite{parr}, as well as, the more accurate Thomas-Fermi-von Weizscaker functional with kernel energies~\cite{wang1,wang2}. In what follows, we present the variational formulations for each of these cases.

\subsection{The Variational Problem}
The TFW family of functionals have the following representation
\begin{equation}
T_s(u) = C_{F}\intR u^{10/3}(\bx) \dx + \frac{\lambda}{2}\intR |\del u(\bx)|^{2} \dx\,,
\end{equation}
where $C_{F} = \frac{3}{10}(3\pi^2)^{2/3}$ and $\lambda$ is a parameter. Though different values of $\lambda$ are found to work better in different cases, $\lambda = \frac{1}{5}$ is chosen in this study since this was found to be an optimal choice for various atomic systems~\cite{parr}.
$E_{xc}$, the exchange correlation energy in functional~\eqref{engy}, is modeled using the local density approximation(LDA)~\cite{alder,perdew} and is given by
\begin{equation}
E_{xc}(u) = \intR \varepsilon_{xc}(u(\bx))u^{2}(\bx)\dx,
\end{equation}
where $\varepsilon_{xc} = \varepsilon_{x} + \varepsilon_{c}$ is the exchange and correlation energy
per electron given by
\begin{equation}
\varepsilon_x(u) = -\frac{3}{4}\left(\frac{3}{\pi}\right)^{1/3}u^{2/3},
\end{equation}
\begin{equation}
\varepsilon_c(u) = \begin{cases}
&\frac{\gamma}{(1 + \beta_1\sqrt(r_s) + \beta_2r_s)}\;\;\;\;\;\;\;\;\;\;\;\;\;\;\;\;\;\;\;\;\;\;\;r_s\geq1\\
&A\,\log r_s + B + C\,r_s\log r_s + D\,r_s\;\;\;\;\;\;\;\;r_s\,<\,1,
\end{cases}
\end{equation}
where $r_s = (\frac{3}{4\pi u^2})^{1/3}$. The values of constants used in this study are those of an unpolarized medium, and are given by $\gamma_u$ = -0.1471, $\beta_{1u}$ = 1.1581, $\beta_{2u}$ = 0.3446, $A_{u}$ = 0.0311, $B_u$ = -0.048, $C_u$ = 0.0014, $D_u$ = -0.0108.
The electrostatic interactions in the functional~\eqref{engy} have the following form:
\begin{align}
E_{H}(u) &= \frac{1}{2}\intR\intR\frac{u^2(\bx)u^2(\bx')}{|\bx - \bx'|}\dx\dx',\label{hartree}\\
E_{ext}(u,\bR) &= \intR u^2(\bx) V_{ext}(\bx) \dx = \sum_{I=1}^{M}\intR u^2(\bx) \frac{Z_I}{|\bx-\bR_I|} \dx ,\label{external}\\
E_{zz} &= \frac{1}{2}\sum_{I=1}^{M}\sum_{\substack{J=1\\J \neq I}}^{M} \frac{Z_I Z_J}{|\bR_I-\bR_J|}. \label{repulsive}
\end{align}
The electrostatic interaction terms as expressed in equations~\eqref{hartree}-\eqref{repulsive} are nonlocal in real-space, and, for this reason, evaluation of electrostatic energy is the computationally expensive part of the calculation. Following the approach in~\cite{ofdft}, the electrostatic interaction energy can be reformulated as a local variational problem in electrostatic potential by observing that $\frac{1}{|\bx|}$ is the Green's function of the Laplace operator. To this end, we consider a nuclear charge $Z_I$ located at $\bR_I$ as a bounded regularized charge distribution $Z_I\tilde{\delta}_{\bR_I}(\bx)$ having a support in a small ball around $\bR_{I}$ and a total charge $Z_I$. The nuclear repulsion energy can subsequently be represented as
 \begin{equation}
E_{zz}(\bR) = \frac{1}{2}\intR\intR\frac{b(\bx)b(\bx')}{|\bx - \bx'|}\dx\dx',
\end{equation}
where $b(\bx) = \sum_{I = 1}^{M}Z_I\tilde{\delta}_{\bR_{I}}(\bx)$. We remark that, while this differs from the expression in equation~\eqref{repulsive} by the self-energy of the nuclei, the self-energy is an inconsequential constant depending only on the nuclear charge distribution, and is explicitly evaluated and subtracted from the total energy in numerical computations. Subsequently, the electrostatic interaction energy, up to a constant self-energy, is given by the following variational problem:
\begin{align}\label{elReformulation}
&\frac{1}{2}\intR\intR\frac{u^2(\bx)u^2(\bx')}{|\bx - \bx'|}\dx\dx' +
\intR u^2(\bx) V_{ext}(\bx) \dx + \frac{1}{2}\intR\intR\frac{b(\bx)b(\bx')}{|\bx - \bx'|}\dx\dx' \notag\\
&= -\inf_{\phi \in \mathcal{Y}} \left\{\frac{1}{8\pi}\intR |\del \phi(\bx)|^2 \dx - \intR (u^2(\bx) + b(\bx))\phi(\bx)\dx\right\},
\end{align}
where $\phi(\bx)$ denotes the trial function for the total electrostatic potential due to the electron-density and the nuclear charge distribution and $\mathcal{Y}$ is a suitable function space that guarantees existence of a minimizer.
Using equation ~\eqref{elReformulation}, the energy functional ~\eqref{engy} is reformulated in a local form given by
\begin{equation}
E(u,\bR) = \sup_{\phi \in \mathcal{Y}} L(u,\phi,\bR),
\end{equation}
where
\begin{equation}
\begin{split}
L(u,\phi,\bR) &= C_F\intR u^{10/3}(\bx)\dx + \frac{\lambda}{2}\intR |\del u(\bx)|^{2} \dx + \intR \varepsilon_{xc}(u^2(\bx))u^2(\bx)\dx \\
&- \frac{1}{8\pi}\intR |\del \phi(\bx)|^2 \dx + \intR (u^2(\bx) + b(\bx))\phi(\bx)\dx.
\end{split}
\end{equation}
Finally, the problem of determining the ground-state energy and electron-density for given positions of the nuclei can be expressed as the saddle-point problem
\begin{equation}\label{infsup}
\inf_{\raisebox{-0.7ex}{$\scriptstyle u \in \mathcal{X}$}} \sup_{\phi\in \mathcal{Y}} L(u,\phi,\bR)\qquad \mbox{subject to:\,\,} \intR u^2(\bx) \dx = N,
\end{equation}
where $\mathcal{X}$ denotes a suitable function space that guarantees the existence of minimizers. We remark that numerical computations are done on a finite-domain, $\varOmega$, which in non-periodic calculations corresponds to a large enough domain containing the compact support of $u$ for all practical purposes, and which in periodic calculations corresponds to the domain defining the unit cell. $\mathcal{X}=\mathcal{Y}=H^1_0(\Omega)$ for non-periodic problems and $\mathcal{X}=H^1_{per}(\Omega)$ and $\mathcal{Y} = \{\phi:\phi\in H^1_{per}(\Omega), \intomega{\phi}=0\}$  for periodic problems guarantees existence of solutions for the saddle-point in~\eqref{infsup}, and refer to~\cite{ofdft} for further details. We also refer to~\cite{lieb} for a comprehensive mathematical analysis of various flavors of orbital-free DFT models on unbounded domains.

We now turn towards the variational formulation of the orbital-free DFT model with kernel energies.  The kinetic energy of non-interacting electrons, $T_s(u)$, modeled using kernel energies has the following representation
\begin{equation}
T_s(u) = C_F\intR u^{10/3}(\bx) \dx + \frac{1}{2}\intR |\del u(\bx)|^2\dx + T_k(u),
\end{equation}
where $T_k$ is the non-local kernel energy and is given by
\begin{equation}
T_k(u) = \intR \intR u^{2\alpha}(\bx)K(|\bx-\bx'|,u(\bx),u(\bx'))u^{2\beta}(\bx')\dx\dx' \label{ker_engy}.
\end{equation}
The kernel $K$ is chosen such that $T_s$ satisfies the Lindhard susceptibility function for an appropriate choice of the parameters $\alpha$, $\beta$~\cite{wang1,wang2}. Further, these kernels are referred to as density dependent(DD) or density independent(DI) kernels based on the dependence or independence of kernel $K$ on $u$. It is a common practice to decompose the DD kernels into a series of DI kernels through a Taylor expansion about a reference density~\cite{wang2,kaxiras} and a similar approach is followed in this work.

Similar to the electrostatic interaction energy, the kernel energy has interactions that are extended in real space. The local reformulation of non-local kernel energies into a saddle point problem has been addressed in \cite{bala}, and the formulation is provided here for the sake of completeness. Defining the potentials $V_{\alpha}(\bx)$ and $V_{\beta}(\bx)$ as
\begin{subequations}\label{ker_pot}
\begin{gather}
V_{\alpha}(\bx) = \intR \Kalpbet \ualpha(\bx')\dx' \,,  \\
V_{\beta}({\bx}) = \intR \Kalpbet \ubeta(\bx') \dx' \,,
\end{gather}
\end{subequations}
we take the fourier transform to obtain $\hat{V}_{\alpha}(\bk) = \hat{K}(\bk)\hat{u^{2\alpha}}(\bk)$ and $\hat{V}_{\beta}(\bk) = \hat{K}(\bk)\hat{u^{2\beta}}(\bk)$. As shown in~\cite{kaxiras}, $\hat{K}$ can approximated to a very good accuracy using a sum of partial fractions of the following form:
\begin{equation}
\hat{K}(\bk) \approx \sum_{J=1}^{m} \frac{A_J|\bk|^2}{|\bk|^2 + B_J} \,,
\end{equation}
where $A_{J}, B_{J}, J = 1\cdots m$ are complex constants determined using a best fit approximation. Using this approximation, the potentials defined in~\eqref{ker_pot} take the form
\begin{subequations}
\begin{gather}
V_{\alpha}(\bx) = \sum_{J=1}^{m} \left[\alpjn(\bx) + A_J\ualpha(\bx)\right]\,, \\
V_{\beta}(\bx) = \sum_{J=1}^{m}\left[\betjn(\bx) + A_J\ubeta(\bx)\right]\,,
\end{gather}
\end{subequations}
where $\alpjn(\bx)$ and $\betjn(\bx)$ are referred to as kernel potentials and are solutions to the following Helmholtz equations for $J=1\ldots m$:
\begin{subequations}\label{helm_de}
\begin{gather}
-\del^2\alpjn + B_J\alpjn + A_JB_J\ualpha = 0 \,,  \\
-\del^2\betjn + B_J\betjn + A_JB_J\ubeta = 0\,.
\end{gather}
\end{subequations}
In the notation introduced in the above equation, we note that $\alpha_J$ in $\alpjn$ is only a superscript, and should not be interpreted as the power of $\omega$ (and likewise for $\betjn$). For convenience of notation, we define $\mathbf{\tilde{\omega}}^{\alpha}=\{\omega^{\alpha_1},\omega^{\alpha_2},\ldots,\omega^{\alpha_m}\}$ and $\mathbf{\tilde{\omega}}^{\beta}=\{\omega^{\beta_1},\omega^{\beta_2},\ldots,\omega^{\beta_m}\}$, which denote the vectors containing the corresponding kernel potentials. The Helmholtz equations in equation~\eqref{helm_de} can be expressed in a variational form which allows us to reformulate the non-local energies in~\eqref{ker_engy} as the following local saddle-point problem:
\begin{equation}\label{ker}
T_k(u) = \inf_{{\raisebox{-0.7ex}{$\scriptstyle\mathbf{\tilde{\omega}}^{\alpha}\in \mathcal{Z}$}}}\,\sup_{\mathrm{\tilde{\omega}}^{\beta} \in \mathcal{Z}}\bar{L}(u,\mathbf{\tilde{\omega}}^{\alpha},\mathbf{\tilde{\omega}}^{\beta})\,,
\end{equation}
where
\begin{equation}
\begin{split}
\bar{L}(u,\mathbf{\tilde{\omega}^{\alpha}},\mathbf{\tilde{\omega}^{\beta}}) &= \sum_{J=1}^{m}\Bigl\{\intR \Bigl[\frac{1}{A_J\;B_J}\del\alpjn\cdot\del\betjn + \frac{1}{A_J}\alpjn\betjn \\
&+ \betjn\ualpha + \alpjn\ubeta + A_J\ualphabeta\Bigr]\dx\Bigr\}.
\end{split}
\end{equation}
In the saddle-point problem in equation~\eqref{ker}, the function space is chosen to be $\mathcal{Z}=(H^1_{per}(\Omega))^m$ for periodic problems where $\varOmega$ denotes the unit-cell in the periodic calculation, and we do not treat non-periodic problems with kernel functionals. Using the local reformulation of the electrostatic interactions and kernel energies from equations~\eqref{elReformulation} and~\eqref{ker}, the problem of computing the ground-state properties in orbital-free DFT, for a fixed position of atoms, can now be expressed as the following local saddle-point problem in real-space on bounded domains as:
\begin{equation}\label{infsup_kernel}
\inf_{\raisebox{-0.7ex}{$\scriptstyle u \in \mathcal{X}$}}\,\sup_{\phi\in\mathcal{Y}}\,\inf_{{\raisebox{-0.7ex}{$\scriptstyle\mathbf{\tilde{\omega}}^{\alpha}\in \mathcal{Z}$}}}\,\sup_{\mathbf{\tilde{\omega}}^{\beta}\in\mathcal{Z}}\, \tilde{L}(u,\phi,\mathbf{\tilde{\omega}}^{\alpha},\mathbf{\tilde{\omega}}^{\beta},\bR)\qquad\mbox{subject to:\,\,}\intomega u^2 \dx = N,
\end{equation}
where
\begin{equation}
\tilde{L}(u,\phi,\mathbf{\tilde{\omega}}^{\alpha},\mathbf{\tilde{\omega}}^{\beta},\bR) = L(u,\phi,\bR) + \bar{L}(u,\mathbf{\tilde{\omega}}^{\alpha},\mathbf{\tilde{\omega}}^{\beta}).
\end{equation}
The local variational formulation of the ground-state electronic-structure described above forms the basis of the finite-element approximation schemes discussed subsequently.

\subsection{The Discrete Problem:}
 Let $X_{h}^{u}$ with dimension $n_{u}$, $X_{h}^{\phi}$ with dimension $n_{\phi}$, $X_{h}^{\omega}$ with dimension $n_{\omega}$ denote the finite-dimensional subspaces of $\mathcal{X}$ , $\mathcal{Y}$  and $\mathcal{Z}$   respectively. The discrete problem corresponding to~\eqref{infsup_kernel} is given by the following constrained saddle-point problem:
 \begin{equation}
 \begin{split}
 &\inf_{\raisebox{-0.7ex}{$\scriptstyle u_{h}\in X_{h}^{u}$}}\,\sup_{\phi_h \in X_{h}^{\phi}}\,\inf_{\raisebox{-0.7ex}{$\scriptstyle{\mathbf{\tilde{\omega}}^{\alpha}}_h\in X_{h}^{\omega}$}}\,\sup_{{\mathbf{\tilde{\omega}}^{\beta}}_h \in X_h^{\omega}}\, \tilde{L}(u_h,\phi_h,{\mathbf{\tilde{\omega}}^{\alpha}}_h,{\mathbf{\tilde{\omega}}^{\beta}}_h,\bR)\\
&  \;\;\;\;\text{subject to:\,\,}\intomega u_h^2 \dx = N.
  \end{split}\label{fem_kernel}
  \end{equation}
 Similarly, the discrete problem corresponding to~\eqref{infsup} is given by
 \begin{equation}
 \inf_{\raisebox{-0.7ex}{$\scriptstyle u_h \in X_h^{u}$}}\sup_{\phi_h \in X_{h}^{\phi}} L(u_h,\phi_h,\bR) \;\;\;\;\text{subject to}\intomega u_h^2 \dx = N \label{fem}.
 \end{equation}

The convergence of finite-element approximation for the orbital-free DFT model with TFW family of kinetic energy functionals was shown in~\cite{ofdft} using the notion of $\Gamma-$ convergence. Recent numerical analysis of the finite-element discretization also estimated the rates of convergence of the approximation, and we refer to ~\cite{ortner,zhou} for further details.

The higher-order finite element discretization of the saddle-point orbital-free DFT problem leads to a natural question: what is an efficient approach to solve the resulting nonlinear mixed-field finite-element equations in ~\eqref{fem_kernel} and ~\eqref{fem}? Two possible approaches to solve the problem are considered: (i) a simultaneous solution of all the fields using quasi-Newton methods; (ii) a staggered approach where the electron-density is solved using quasi-Newton methods, whereas the potential fields are computed consistently for every trial electron-density using iterative linear solvers. Appendix~\ref{solve} derives the solvability conditions of the linearized discrete finite-element equations arising out of ~\eqref{fem}, which provides insights into the robustness of these two approaches. The analysis of these solvability conditions indicates that, while the discrete system of finite-element equations in the staggered approach is solvable for any choice of interpolation spaces of the electronic fields, the simultaneous scheme imposes restrictions on the choice of the interpolation spaces of the fields involved. The necessary condition for the discrete system of finite-element equations to be solvable in a simultaneous scheme is that the rank of the Hessian matrix corresponding to root-electron-density ($\bA$) must be greater than or equal to $n_u - (n_{\phi} + 1)$ which depends on the initial guess of root-electron-density and electrostatic potential. Furthermore, if the finite element discretization is chosen such that $rank(\bA) \geq n_u - (n_{\phi} + 1)$ is always satisfied, the sufficiency condition for invertibility of the system matrix is given by (i) in Proposition~\ref{prop1}, which is difficult to check for any general discretization and guess of electronic fields.  Similar arguments can be extended to analyze the solvability conditions of the discrete finite-element problem  involving kernel potentials in ~\eqref{fem_kernel}. This analysis shows that the staggered solution procedure is a more robust approach to solve the mixed-field discrete finite-element equations corresponding to the orbital-free DFT problem and will be employed in the subsequent numerical investigations.

Next we derive the optimal coarse-graining rates for the finite-element meshes using the asymptotic nature of the solution fields in the orbital-free DFT problem.

%\section{Solvability of the Discrete FE problem}\label{solve}
%\input{solve.tex}
\section{\emph{A priori} Mesh Adaption}\label{meshadapt}
Following~\cite{radio}, we present an {\it a priori} mesh adaption scheme by minimizing the error in the finite-element approximation of the orbital-free DFT problem for a fixed number of elements in the mesh. To this end, we first seek a bound on the energy error $|E-E_h|$ as a function of the characteristic mesh-size, $h$, and the distribution of electronic fields. We remark that error estimates have been rigorously derived in recent studies for the orbital-free DFT problem~\cite{ortner,zhou,Cances}. However, the form of these estimates is not useful for developing mesh adaption schemes as these studies primarily focused on proving the convergence of the finite-element approximation and determining the convergence rates. In what follows, we present the derivation of error bound in terms of the root-electron-density and the electrostatic potential for the orbital-free DFT problem with TFW kinetic energy functionals. Similar error estimates for orbital-free DFT models with kernel energies are discussed in appendix~\ref{error_kernel}.
\subsection{Estimate of Energy Error}
Let ($\ubarn_h$, $\phibar_h$, $\mubar_h$) and ($\ubarn$, $\phibar$, $\mubar$) be the solutions of the discrete finite-element problem \eqref{fem} and the continuous problem \eqref{infsup} respectively for a given set of nuclear positions, where the nuclear charges are represented by a bounded regularized charge distribution $b(\bx)$.  The ground state energy in the discrete and the continuous formulations can be expressed as
\begin{equation*}
E_h(\ubarn_h,\phibar_h) \\
= \frac{\lambda}{2}\intomega |\del \ubarn_h|^{2} \dx \;+\; \intomega F(\ubarn_h) \dx\; - \; \frac{1}{8\pi}\intomega |\del \phibar_h|^2 \dx + \intomega (\ubarn_h^2 \; +\; b)\phibar_h \dx\,,
\end{equation*}
\begin{equation*}
E(\ubarn,\phibar) = \frac{\lambda}{2}\intomega |\del \ubarn|^{2} \dx \;+\; \intomega F(\ubarn) \dx\; - \; \frac{1}{8\pi}\intomega |\del \phibar|^2 \dx + \intomega (\ubarn^2 \; +\; b)\phibar \dx\,, \label{contengy}
\end{equation*}
where
\begin{equation*}
F(u) = C_F\;u^{10/3} + \varepsilon_{xc}(u^2)u^2\,.
\end{equation*}
\begin{prop}\label{prop3}
In the neighborhood of ($\ubarn$, $\phibar$, $\mubar$), the finite-element approximation error in the ground state energy can be bounded as:
\end{prop}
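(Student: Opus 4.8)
The plan is to exploit the fact that the exact fields $(\ubarn,\phibar,\mubar)$ render the constrained Lagrangian stationary, so that the energy error is \emph{second order} in the field errors. To make the mass constraint explicit I would introduce
\begin{equation*}
\Pi(u,\phi,\mu) = \frac{\lambda}{2}\intomega|\del u|^{2}\dx + \intomega F(u)\dx - \frac{1}{8\pi}\intomega|\del\phi|^{2}\dx + \intomega (u^{2}+b)\phi\dx - \mu\Bigl(\intomega u^{2}\dx - N\Bigr),
\end{equation*}
and observe that, since the exact and discrete fields each satisfy $\intomega u^{2}\dx=N$, the multiplier terms vanish and $E=\Pi(\ubarn,\phibar,\mubar)$, $E_h=\Pi(\ubarn_h,\phibar_h,\mubar_h)$. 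First I would Taylor-expand $\Pi$ about $(\ubarn,\phibar,\mubar)$ in the increments $e_u=\ubarn_h-\ubarn$, $e_\phi=\phibar_h-\phibar$, $e_\mu=\mubar_h-\mubar$, all of which lie in the relevant $H^1$ space. The first variation $D\Pi(\ubarn,\phibar,\mubar)$ vanishes identically because $(\ubarn,\phibar,\mubar)$ solves the Euler--Lagrange system of \eqref{infsup} (stationarity in $u$, the Poisson equation in $\phi$, and the constraint in $\mu$). The leading contribution is therefore the second variation,
\begin{equation*}
E - E_h = -\tfrac{1}{2}\,D^{2}\Pi(\ubarn,\phibar,\mubar)\bigl[(e_u,e_\phi,e_\mu),(e_u,e_\phi,e_\mu)\bigr] + \mathcal{R}.
\end{equation*}

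The second step is to bound this quadratic form. Its $u$-block is $\lambda(\del\,\cdot,\del\,\cdot)+\intomega\bigl(F''(\ubarn)+2\phibar-2\mubar\bigr)(\cdot)(\cdot)\dx$, its $\phi$-block is the negative-definite $-\tfrac{1}{4\pi}(\del\,\cdot,\del\,\cdot)$, and the couplings are $2\intomega\ubarn(\cdot)(\cdot)\dx$ (between $u$ and $\phi$) and $-2\intomega\ubarn(\cdot)\dx$ (between $u$ and $\mu$). Each block is a continuous bilinear form on the relevant $H^1$ space provided $F''(\ubarn)$ and $\phibar$ are bounded on the support of the solution, so $|D^{2}\Pi[\,\cdot\,]|\le C\bigl(\|e_u\|_{H^1}^{2}+\|e_\phi\|_{H^1}^{2}+|e_\mu|^{2}\bigr)$, which is the asserted estimate once the remainder $\mathcal{R}$ of cubic and higher terms is absorbed. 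Controlling $\mathcal{R}$ is exactly where the hypothesis \emph{``in the neighborhood of $(\ubarn,\phibar,\mubar)$''} enters: for errors small in $H^1$, with $\ubarn$ bounded and strictly positive away from the nuclei (legitimate because $b$ is a \emph{regularized} charge, so the $u^{10/3}$ and exchange--correlation nonlinearities have bounded higher derivatives), the trilinear term $\intomega u^{2}\phi\dx$ and the remainder of $F$ yield $\mathcal{R}=o\bigl(\|e_u\|_{H^1}^{2}+\|e_\phi\|_{H^1}^{2}\bigr)$.

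To convert this into the advertised dependence on the characteristic mesh size $h(\bx)$ and the solution fields, I would then invoke quasi-optimal (C\'ea-type) bounds $\|e_u\|_{H^1}\lesssim\inf_{v_h\in X_h^u}\|\ubarn-v_h\|_{H^1}$ and $\|e_\phi\|_{H^1}\lesssim\inf_{\psi_h\in X_h^\phi}\|\phibar-\psi_h\|_{H^1}$, and close with element-wise polynomial interpolation estimates; for elements of degree $k$ these give $O(h^{k})$, and squaring reproduces the optimal $O(h^{2k})$ energy rate and the local error indicator needed for the mesh-adaption scheme of \cite{radio}.

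The hard part will be the stability input behind the quasi-optimality of the field errors. Because the saddle point is \emph{indefinite}---negative-definite in $\phi$ and only positive on the kernel of the mass constraint in $u$---passing from the energy bound to separate control of $\|e_u\|_{H^1}$ and $\|e_\phi\|_{H^1}$ requires the local non-degeneracy of $D^{2}\Pi$, namely coercivity of the $u$-block on the constraint subspace together with an inf--sup condition for the $u$--$\phi$ coupling. This stability is inherited from the exact solution and is precisely what makes the neighborhood analysis well posed; it can be imported from the solvability analysis underlying Proposition~\ref{prop1} (itself based on Proposition~3.3 of~\cite{ortner}). The companion subtlety, bounding $\mathcal{R}$ uniformly despite the singular behavior of the density nonlinearities, is manageable only because the regularized nuclear charge keeps $\ubarn$ smooth and bounded away from a small neighborhood of the nuclei.
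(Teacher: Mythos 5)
Your proposal is correct and follows essentially the same route as the paper: a second-order Taylor expansion of the energy about the exact solution $(\ubarn,\phibar,\mubar)$, with the first-order terms eliminated by the Euler--Lagrange equations and the exactly satisfied constraint, and the resulting second-variation terms bounded by the triangle inequality. The only cosmetic difference is that you absorb the multiplier into the functional so the first variation vanishes identically, whereas the paper expands $L$ alone and then uses the identity $\intomega (\delta u)^2 \dx = -2\intomega \ubarn\,\delta u \dx$ to convert the residual first-order term into the $\mubar\left|\intomega(\delta u)^2\dx\right|$ contribution; the material in your later paragraphs (Sobolev/H\"older bounds, C\'ea-type quasi-optimality, interpolation estimates) belongs to Proposition~\ref{prop4} and the subsequent mesh-adaption discussion rather than to this proposition.
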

\begin{equation}
\begin{split}
|E_h - E| \leq &\frac{\lambda}{2} \left|\intomega |\del \delta u|^2 \dx\right| + \mubar\left|\intomega (\delta u)^2 \dx\right| + \frac{1}{2} \left|\intomega F''(\ubarn) (\delta u)^2 \dx \right|\\
&+ \frac{1}{8\pi} \left|\intomega |\del \delta \phi|^2 \dx \right|+ \left|\intomega (\delta u)^2\phibar \dx\right|  + 2 \left|\intomega \ubarn\; \delta u \;\delta \phi \dx\right|\,.
\end{split}
\end{equation}
\begin{proof}
We begin by expanding $E_h(\ubarn_h,\phibar_h)$ about the solution of the continuous problem, i.e $\ubarn_h = \ubarn + \delta u$ and $\phibar_h = \phibar + \delta \phi$. Using Taylor series expansion, we get
\begin{equation}
\begin{split}
E_h(\ubarn + \delta u, \phibar + \delta \phi) = &\frac{\lambda}{2}\intomega |\del(\ubarn + \delta u)|^{2} \dx + \intomega F(\ubarn + \delta u) \dx \;\\
 &- \;\frac{1}{8\pi}\intomega |\del (\phibar + \delta \phi)|^{2} \dx + \intomega \left((\ubarn + \delta u)^2 +b \right)\left(\phibar + \delta \phi\right)\dx\,,
 \end{split}
\end{equation}
which can be simplified to
\begin{equation}\label{taylor}
\begin{split}
&E_h(\ubarn_h,\phibar_h) = \frac{\lambda}{2} \intomega \left( |\del \ubarn|^2 + |\del \delta u|^2 + 2\del \ubarn\cdot  \del \delta u\right)\dx + \intomega F(\ubarn) \dx + \intomega F'(\ubarn) \delta u \dx \\&+ \frac{1}{2}\intomega F''(\ubarn) (\delta u)^2 \dx - \frac{1}{8\pi} \intomega \left( |\del \phibar|^2 + |\del \delta \phi|^2 + 2\del \phibar\cdot  \del \delta \phi\right)\dx + \intomega (\ubarn^2 + b)\phibar \dx \\&+ 2\intomega \ubarn\; \delta u\; \phibar \dx + \intomega (\ubarn^2 + b)\delta \phi \dx  + \intomega (\delta u)^2\phibar \dx +  2 \intomega \ubarn\; \delta u \;\delta \phi \dx + O(\delta u^{3}, \delta \phi^{3}, \delta u^{2} \delta \phi, \delta u \delta \phi^{2})\,.
\end{split}
\end{equation}
Since ($\ubarn$, $\phibar$, $\mubar$) satisfy the Euler-Lagrange equations of the functional \eqref{infsup}, we have
\begin{subequations}
\begin{gather}\label{euler}
\lambda \intomega \del \ubarn\cdot  \del \delta u \dx + \intomega F'(\ubarn) \delta u \dx + 2 \intomega \ubarn\; \delta u\; \phibar \dx = - 2\intomega \mubar\; \ubarn\; \delta u \dx\,,  \\
-\frac{1}{4\pi} \intomega \del \phibar \cdot \del \delta \phi + \intomega (\ubarn^2 + b)\delta \phi  = 0\,.
\end{gather}
\end{subequations}
Using equation~\eqref{taylor} and the above Euler-Lagrange equations, we get
\begin{equation}\label{error1}
\begin{split}
E_h - E &= \frac{\lambda}{2} \intomega |\del \delta u|^2 \dx - 2\mubar \intomega  \ubarn\; \delta u \dx + \frac{1}{2} \intomega F''(\ubarn) (\delta u)^2 \dx \\
&- \frac{1}{8\pi} \intomega |\del \delta \phi|^2 \dx + \intomega (\delta u)^2\phibar \dx + 2 \intomega \ubarn\; \delta u \;\delta \phi \dx + O(\delta u^{3}, \delta \phi^{3}, \delta u^{2} \delta \phi, \delta u \delta \phi^{2}) \,.
\end{split}
\end{equation}
Note that the constraint functional in the discrete form,
\begin{equation}
c(u_h) = \intomega u_h^2 \dx - N\,,
\end{equation}
is also expanded about the solution $\ubarn$, to give
\begin{equation}
c(u_h) = \intomega (\ubarn + \delta u)^2 \dx - N = \intomega (\ubarn^2 + (\delta u)^2 + 2\ubarn \delta u) \dx - N\,.
\end{equation}
Using
\begin{equation}
\intomega \ubarn^2 \dx = N\,,
\end{equation}
and $c(\ubarn_h) = 0$ we get
\begin{equation}\label{multiplier}
\intomega (\delta u)^ 2 = -2 \intomega \ubarn\; \delta u \dx\,.
\end{equation}
 Using equations~\eqref{error1} and \eqref{multiplier} we arrive at the following error bound in energy, upon neglecting third-order terms and beyond
\begin{equation}\label{error2}
\begin{split}
|E_h - E| \leq & \frac{\lambda}{2} \left|\intomega |\del \delta u|^2 \dx\right| + \mubar\left|\intomega (\delta u)^2 \dx\right| + \frac{1}{2} \left|\intomega F''(\ubarn) (\delta u)^2 \dx \right|\\
&+ \frac{1}{8\pi} \left|\intomega |\del \delta \phi|^2 \dx \right|+ \left|\intomega (\delta u)^2\phibar \dx \right| + 2 \left|\intomega \ubarn\; \delta u \;\delta \phi \dx\right|\,.
\end{split}
\end{equation}
\end{proof}
\begin{prop}\label{prop4}
The finite-element approximation error in proposition~\ref{prop3} expressed in terms of the approximation errors in root-electron-density and electrostatic potential is given by
\begin{equation}
|E_h - E| \leq C \left(\parallel\ubarn - \ubarn_h\parallel^{2}_{1,\Omega} +|\phibar - \phibar_h|^{2}_{1,\Omega} + \parallel\ubarn - \ubarn_h\parallel_{0,\Omega} \parallel\phibar - \phibar_h\parallel_{1,\Omega}\right)
\end{equation}
\end{prop}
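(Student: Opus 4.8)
The plan is to start directly from the bound established in Proposition~\ref{prop3} and to estimate each of its six terms individually by the three norm quantities appearing on the right-hand side, writing $\delta u = \ubarn_h - \ubarn$ and $\delta \phi = \phibar_h - \phibar$ throughout and absorbing all prefactors into a single constant $C$ at the end. The target norms are the full $H^1$ norm $\parallel\cdot\parallel_{1,\Omega}$, the $H^1$ seminorm $|\cdot|_{1,\Omega}$, and the $L^2$ norm $\parallel\cdot\parallel_{0,\Omega}$, so the entire argument reduces to recognizing which norm naturally controls each integrand.

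The first four terms are essentially immediate. The term $\frac{\lambda}{2}\left|\intomega |\del \delta u|^2\dx\right|$ equals $\frac{\lambda}{2}|\delta u|^2_{1,\Omega} \leq \frac{\lambda}{2}\parallel\delta u\parallel^2_{1,\Omega}$, and likewise $\frac{1}{8\pi}\left|\intomega |\del\delta\phi|^2\dx\right| = \frac{1}{8\pi}|\delta\phi|^2_{1,\Omega}$ supplies the $|\phibar-\phibar_h|^2_{1,\Omega}$ contribution. The term $\mubar\left|\intomega(\delta u)^2\dx\right| = \mubar\parallel\delta u\parallel^2_{0,\Omega} \leq \mubar\parallel\delta u\parallel^2_{1,\Omega}$, since $\mubar$ is a fixed scalar at the solution. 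For $\frac{1}{2}\left|\intomega F''(\ubarn)(\delta u)^2\dx\right|$ I would pull out $\parallel F''(\ubarn)\parallel_{L^\infty(\Omega)}$ and bound what remains by $\parallel\delta u\parallel^2_{0,\Omega} \leq \parallel\delta u\parallel^2_{1,\Omega}$. The two remaining terms require Sobolev machinery: for $\left|\intomega(\delta u)^2\phibar\dx\right|$ I would use H\"older with exponents $(3,\tfrac{3}{2})$ together with the embedding $H^1(\Omega)\hookrightarrow L^6(\Omega)$ valid in three dimensions, giving $\parallel(\delta u)^2\parallel_{L^3}\parallel\phibar\parallel_{L^{3/2}} = \parallel\delta u\parallel^2_{L^6}\parallel\phibar\parallel_{L^{3/2}} \leq C\parallel\delta u\parallel^2_{1,\Omega}$, which contributes once more to $\parallel\ubarn-\ubarn_h\parallel^2_{1,\Omega}$. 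For the cross term $2\left|\intomega\ubarn\,\delta u\,\delta\phi\dx\right|$ I would pair $\ubarn$ and $\delta\phi$ first, estimating $\parallel\ubarn\,\delta\phi\parallel_{L^2}\leq\parallel\ubarn\parallel_{L^6}\parallel\delta\phi\parallel_{L^3}\leq C\parallel\delta\phi\parallel_{1,\Omega}$ by H\"older with exponents $(6,3)$ and the embedding $H^1\hookrightarrow L^3$, and then applying Cauchy--Schwarz against $\delta u$ to produce exactly $\parallel\ubarn-\ubarn_h\parallel_{0,\Omega}\parallel\phibar-\phibar_h\parallel_{1,\Omega}$. Collecting the six estimates and taking $C$ to be the maximum of the resulting constants yields the stated bound.

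The hard part is the regularity input needed to make the third and fifth estimates rigorous, rather than any of the inequalities themselves. Bounding the $F''(\ubarn)$ term requires $F''(\ubarn)\in L^\infty(\Omega)$, which is delicate because $F(u)=C_F u^{10/3}+\varepsilon_{xc}(u^2)u^2$ contains fractional powers and the LDA correlation contribution with its $\log r_s$ dependence; one must argue that, for the fixed solution $\ubarn$ which is bounded on $\Omega$, the composition $F''(\ubarn)$ stays essentially bounded, the potentially singular behaviour as $u\to 0$ being controlled since $\varepsilon_{xc}(u^2)u^2$ and its derivatives vanish there. Similarly, the estimates for the last two terms presuppose $\phibar,\ubarn\in H^1(\Omega)$ with finite $L^6$ norms, which holds because the nuclear charge is represented by the bounded regularized distribution $b(\bx)$, so that $\phibar$ solves a Poisson problem with bounded data and inherits the requisite integrability. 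Once these regularity facts are in hand, the constant $C$ depends only on $\lambda$, $\mubar$, $\parallel F''(\ubarn)\parallel_{L^\infty}$, $\parallel\phibar\parallel_{L^{3/2}}$, $\parallel\ubarn\parallel_{L^6}$, and the Sobolev embedding constants, all of which are finite in the neighborhood of the solution assumed in Proposition~\ref{prop3}.
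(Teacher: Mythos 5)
Your proposal is correct and follows essentially the same route as the paper: a term-by-term estimate of the six contributions from Proposition~\ref{prop3} using H\"older/Cauchy--Schwarz and Sobolev embeddings, with the cross term handled by exactly the same generalized H\"older split $(6,2,3)$. The only differences are cosmetic choices of exponents in two terms --- the paper bounds the $F''(\ubarn)$ and $(\delta u)^2\phibar$ terms via $\|F''(\ubarn)\|_{0,\Omega}\,\|\delta u\|^2_{0,4,\Omega}$ and $\|\phibar\|_{0,\Omega}\,\|\delta u\|^2_{0,4,\Omega}$ with the embedding $H^1\hookrightarrow L^4$, which asks only for $F''(\ubarn),\phibar\in L^2$ rather than the $L^\infty$ control of $F''(\ubarn)$ your version requires.
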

\begin{proof}
We make use of  the following norms: $| \cdot |_{1,\Omega}$ represents the semi-norm in $H^{1}$ space, $\parallel\cdot \parallel_{1,\Omega}$  denotes the $H^{1}$ norm, $\parallel\cdot \parallel_{0,\Omega}$ and  $\parallel\cdot \parallel_{0,p,\Omega}$  denote the  standard $L^{2}$ and  $L^{p}$ norms respectively. All the constants to appear in the following estimates are positive and bounded. Firstly, we note that
\begin{equation}
\frac{\lambda}{2} \left|\intomega |\del \delta u|^2 \dx\right| = \frac{\lambda}{2} \intomega |\del \delta u|^2 \dx \leq C_1|\ubarn - \ubarn_h|^{2}_{1,\Omega}\,,
\end{equation}
\begin{equation}
\mubar \left|\intomega (\delta u)^2 \dx\right| = \mubar \intomega (\ubarn_h - \ubarn)^2 \dx \leq C_2 \parallel\ubarn - \ubarn_h\parallel ^{2}_{0,\Omega}\,.
\end{equation}
Using Cauchy-Schwartz and Sobolev inequalities, we arrive at the following estimate
\begin{align}
\frac{1}{2} \left|\intomega F''(\ubarn) (\delta u)^2 \dx \right| &\leq \frac{1}{2} \intomega  \left|F''(\ubarn) (\ubarn_h - \ubarn)^2\right| \dx \nonumber\\[0.1in]
&\leq C_3 \parallel F''(\ubarn) \parallel_{0,\Omega} \parallel(\ubarn - \ubarn_h)^2\parallel_{0,\Omega}\nonumber\\[0.1in]
&= C_3 \parallel F''(\ubarn) \parallel_{0,\Omega} \parallel\ubarn - \ubarn_h\parallel^{2}_{0,4,\Omega}\nonumber\\[0.1in]
&\leq \bar{C}_3 \parallel\ubarn - \ubarn_h\parallel^{2}_{1,\Omega}\,.
\end{align}
Further, we note
\begin{equation}
\frac{1}{8\pi} \left|\intomega |\del (\phibar_h - \phibar)|^2 \dx\right| \leq C_4 |\phibar - \phibar_h|^{2}_{1,\Omega}\,.
\end{equation}
Using Cauchy-Schwartz and Sobolev inequalities we arrive at
\begin{align}
\left|\intomega (\delta u)^2\phibar\dx\right| \leq \intomega \left|(\ubarn_h -\ubarn)^2\;\phibar\right| \dx &\leq \parallel \phibar \parallel_{0,\Omega} \parallel (\ubarn_h - \ubarn)^2\parallel_{0,\Omega}\nonumber\\
&\leq C_5 \parallel\ubarn - \ubarn_h\parallel^{2}_{0,4,\Omega}\nonumber\\
& \leq \bar{C}_5 \parallel\ubarn - \ubarn_h\parallel^{2}_{1,\Omega}\,.
\end{align}
Also note that
\begin{align}
\left|\intomega \ubarn\; \delta u \;\delta \phi \dx\right| &\leq \intomega \left|\ubarn (\ubarn_h - \ubarn)(\phibar_h - \phibar)\right|\dx \nonumber\\
&\leq \; \parallel \ubarn \parallel_{0,6,\Omega} \parallel\ubarn - \ubarn_h\parallel_{0,\Omega} \parallel\phibar - \phibar_h\parallel_{0,3,\Omega}\nonumber\\
&\leq C_6 \parallel\ubarn - \ubarn_h\parallel_{0,\Omega} \parallel\phibar - \phibar_h\parallel_{1,\Omega}\,,
\end{align}
where we made use of the generalized H\"older inequality in the first step and Sobolev inequality in the next. Using the bounds derived above, it follows that
\begin{equation}\label{error3}
|E_h - E| \leq C \left(\parallel\ubarn - \ubarn_h\parallel^{2}_{1,\Omega} + |\phibar - \phibar_h|^{2}_{1,\Omega} + \parallel\ubarn - \ubarn_h\parallel_{0,\Omega} \parallel\phibar - \phibar_h\parallel_{1,\Omega}\right)\,.
\end{equation}
\end{proof}
Now it remains to bound the finite-element discretization error with interpolation errors which can in turn be bounded with size of the finite-element mesh size $h$. This requires a careful analysis in the case of orbital-free DFT, which is a non-linear constrained problem, and has been discussed in~\cite{ortner}. Using the results from Theorem 3.2 in~\cite{ortner}, we bound the estimates in equation~\eqref{error3} using the following inequalities ( cf.~\cite{Ciarlet})
\begin{subequations}
\begin{gather}
\parallel\ubarn - \ubarn_h\parallel_{1,\Omega} \leq \bar{C}_0 \parallel\ubarn - u_I\parallel_{1,\Omega}\leq \tilde{C}_0\sum_{e}h_e^{k}|\ubarn|_{k+1,\Omega_e}\,,\\
\parallel\ubarn - \ubarn_h\parallel_{0,\Omega} \leq \bar{C}_1\parallel\ubarn - u_I\parallel_{0,\Omega} \leq \tilde{C}_1\sum_{e}h_e^{k+1}|\ubarn|_{k+1,\Omega_e}\,,\\
|\phibar - \phibar_h|_{1,\Omega} \leq \bar{C}_2 |\phibar - \phi_I|_{1,\Omega}\leq \tilde{C}_2\sum_{e}h_e^{k}|\phibar|_{k+1,\Omega_e}\,,
\end{gather}
\end{subequations}
where $k$ is the order of the polynomial interpolation, and $e$ denotes an element in the regular family of finite-elements~\cite{Ciarlet} with mesh-size $h_e$ covering a domain $\Omega_e$.
Hence, the error estimate in the energy is given by
\begin{equation}
|E_h - E| \leq \mathcal{C}\sum_{e}\left[h_e^{2k}|\ubarn|_{k+1,\Omega_e}^{2} + h_e^{2k}|\phibar|_{k+1,\Omega_e}^{2} + h_e^{2k+1}|\ubarn|_{k+1,\Omega_e}|\phibar|_{k+1,\Omega_e}\right]\,. \label{errorfinal}
\end{equation}

In the above equation and the analysis to follow, for simplicity, we restrict ourselves to the case where a single finite-element triangulation provides discretization for both root-electron-density and electrostatic potential. Finally, the error estimate to $O(h^{2k+1})$ is given by
\begin{equation}
|E_h - E| \leq \mathcal{C}\sum_{e}h_e^{2k}\left[|\ubarn|_{k+1,\Omega_e}^{2} + |\phibar|_{k+1,\Omega_e}^{2}\right]\,.\label{errorfinal1}
\end{equation}

\subsection{Optimal Coarse-Graining Rate}~\label{sec:optimal_mesh}

We now present the optimal mesh-size distribution that can be estimated by minimizing the approximation error in energy for a fixed number of elements. The approach here closely follows the treatment in~\cite{radio}.
Using the definition of the semi-norms, we rewrite equation~\eqref{errorfinal1} as
\begin{equation}
|E_h - E| \leq \mathcal{C} \sum_{e=1}^{N_{e}} \Bigl[ h_{e}^{2k}\int_{\Omega_{e}} \Bigl[|D^{k+1}\ubarn(\bx)|^{2} + |D^{k+1}\phibar(\bx)|^{2}\Bigr] \dx\Bigr]\,,
\end{equation}
where $N_e$ denotes the total number of elements in the finite-element triangulation. To obtain a continuous optimization problem rather than a discrete one, an element size distribution function $h(\bx)$ is introduced so that the target element size is defined at all points $\bx$ in $\Omega$, and we get
\begin{align}\label{errorN}
|E_h - E| &\leq  \mathcal{C} \sum_{e=1}^{N_{e}} \int_{\Omega_{e}}\Bigl[ h_{e}^{2k} \Bigl[|D^{k+1}\ubarn(\bx)|^{2} + |D^{k+1}\phibar(\bx)|^{2}\Bigr] \dx\Bigr]\,,\\
&\leq\mathcal{C'} \intomega h^{2k}(\bx)\Bigl[|D^{k+1}\ubarn(\bx)|^{2} + |D^{k+1}\phibar(\bx)|^{2}\Bigr] \dx\,.
\end{align}
Further, the number of elements in the mesh is in the order of
\begin{equation}
N_{e} \propto \intomega \frac{\dx}{h^{3}(\bx)} \label{elem}\,.
\end{equation}
The optimal mesh size distribution is then determined by the following variational problem which minimizes the approximation error in energy subject to a fixed number of elements:
\begin{equation}
\min_{h} \intomega \Bigl\{ h^{2k}(\bx)\Bigl[|D^{k+1}\ubarn(\bx)|^{2} + |D^{k+1}\phibar(\bx)|^{2}\Bigr] \Bigr\}\dx \quad \mbox{subject to}:\intomega\frac{\dx}{h^{3}(\bx)}=N_{e}\,.
\end{equation}
The Euler-Lagrange equation associated with the above problem is given by
\begin{equation}
2kh^{2k-1}(\bx)\Bigl[|D^{k+1}\ubarn(\bx)|^{2} + |D^{k+1}\phibar(\bx)|^{2}\Bigr] - \frac{3\eta}{h^{4}(\bx)} = 0\,,
\end{equation}
where $\eta$ is the Lagrange multiplier associated with the constraint. Thus, we obtain the following distribution
\begin{equation}\label{optimmesh}
h(\bx) = A \Bigl(|D^{k+1}\ubarn(\bx)|^{2} + |D^{k+1}\phibar(\bx)|^{2}\Bigr)^{-1/(2k+3)}\,,
\end{equation}
where the constant $A$ is computed from the constraint that the total number of elements in the finite element discretization is $N_e$.

The coarse-graining rate derived in equation~\eqref{optimmesh} has been employed to construct the finite-element meshes for different kinds of problems we study in the subsequent sections.

\section{Numerical Implementation}\label{NumericalImplementation}
In this section, we turn to the numerical implementation of the variational formulation described in Section~\ref{formulation} using the finite-element basis. We first review the finite-element basis used in our study and subsequently discuss the solution procedure.

\subsection{Finite-Element Basis}
Traditionally, linear tetrahedral elements have been the elements of choice for many applications. These elements are well suited for problems involving complex geometries and requiring moderate levels of accuracy. However, in electronic structure calculations, geometries are simpler while the levels of accuracy desired are much higher. We investigate, in our study, if the desired chemical accuracy in electronic structure calculations can possibly be achieved more efficiently by using higher-order finite-element basis functions. We restrict our attention to a small, yet representative subset of existing finite-element basis functions. We employ in our study $C^0$ basis functions comprising of tetrahedral elements with interpolating polynomials up to degree two (TET4 and TET10) and hexahedral elements up to degree four (HEX8, HEX27, HEX64, HEX125). The number following the words `TET' and `HEX' denote the number of nodes in the element. We also investigate the use of spectral-elements, first introduced in~\cite{patera1984spectral}, and applied to electronic structure calculations in~\cite{batcho}. In conventional finite-elements, basis functions are constructed as Lagrange polynomials interpolated through equi-spaced nodes in an element, whereas,  spectral-elements use an optimal distribution of nodes corresponding to the roots of derivatives of Chebyshev or Legendre polynomials. Such a distribution does not have nodes on the boundaries of an element, and it is common to append nodes on element boundaries which guarantees $C^{0}$ basis functions. These set of nodes are commonly referred to as Gauss-Lobatto-Chebyshev or Gauss-Lobatto-Legendre points. We further note that, for a high order of interpolation, conventional finite-elements result in an ill-conditioned problem, whereas, spectral-elements are devoid of this deficiency~\cite{boyd2001chebyshev}. In this work, we use spectral-elements with basis functions of third and fourth degree having nodes positioned at the Gauss-Lobatto-Legendre points.

In the studies to follow, we use Gauss quadrature rules to numerically evaluate the integrals arising in the formulation. We choose our numerical quadrature rules such that the error incurred through quadratures is higher order in comparison to discretization errors. Further, we restrict ourselves to a single finite-element discretization for all solution fields.  We next present a brief description of the solution procedure and solvers employed in this work.

\subsection{Solution Procedure}
The analysis in appendix~\ref{solve} shows that the incremental problem~\eqref{stagg} corresponding to a single step in Newton method is uniquely solvable in a staggered approach, and hence we employ the staggered solution procedure in our numerical studies. In order to provide a good initial guess for the non-linear solvers based on quasi-Newton methods, we first use a non-linear conjugate gradient method implemented using Polak-Ribere scheme~\cite{shewchuk1994} to reduce the residuals corresponding to root-electron-density. This solution is subsequently used as an initial guess to the inexact Newton solver provided by the KINSOL pacakage~\cite{collier2006} that rapidly reduces the residuals in the vicinity of the solution providing quadratic convergence. This is a Newton's Method with Goldstein-Armijo line searches~\cite{dennis1996}. The generalized-minimal residual method (GMRES \cite{saad1986gmres}) is used as the associated linear solver used in the inexact Newton solver, while bi-conjugate gradient stabilized method (BICG, \cite{van1992bi}) and transpose-free quasi-minimal residual method (TFQMR, \cite{freund1993}) are alternate options for linear solvers. All these algorithms are based on Krylov subspace methods and hence require the evaluation of Hessian matrix only through its product with a vector, which is approximated by directional difference quotients. The solution of electrostatic potential (Poisson equation) and kernel potentials (Helmholtz equations) in the staggered procedure is computed using preconditioned linear conjugate gradient method or preconditioned GMRES. A suite of different preconditioners provided by the HYPRE \cite{chowhypre} package is used to improve the conditioning of the problem and  computational efficiency.

\section{Numerical Results}\label{NumericalResults}
\subsection{Rates of Convergence}
We now study the convergence rates of the finite-element approximation with decreasing mesh sizes for different polynomial orders of interpolation. We use three benchmark problems in this study, which include: (i) Hydrogen atom that represents a linear problem with a singular nuclear potential; (ii) Helium atom that represents a non-linear problem with a singular nuclear potential; (iii) Aluminum atom using pseudopotentials that represents a non-linear problem with a smooth external pseudopotential. In the studies which do not use pseudopotentials, the nuclear charges are treated as point charges located on the nodes of the finite-element triangulation, and the discretization provides a regularization for the nuclear charges. We note that the self-energy of the nuclei in this case is mesh-dependent and diverges upon mesh refinement. Thus, the self energy is also computed on the same mesh that is used to compute the total electrostatic potential, which ensures that the divergent components of the variational problem on the right hand side of equation~\eqref{elReformulation} and the self energy exactly cancel owing to the linearity of the Poisson equation.

We adopt the following procedure for conducting the convergence study. The coarsest mesh having $N_e$ elements is constructed with a coarsening rate determined by equation~\eqref{optimmesh} using the {\it a priori} knowledge of the far-field asymptotic solutions of electronic fields. We note that as far-field asymptotic solutions are used in \eqref{optimmesh} to compute the coarse-graining rates as opposed to the exact solutions, the obtained meshes are only optimal in the far-field and possibly sub-optimal near the nuclei. Nevertheless, the meshes constructed using this approach still provide an efficient way of resolving the vacuum in non-periodic calculations as opposed to using a uniform discretization or employing an ad-hoc coarse-graining rate. The sequence of refined meshes are subsequently constructed by a uniform subdivison of the coarse-mesh, which represents a systematic refinement of the approximation space. The finite-element ground-state energies, $E_h$, obtained from each of these subdivisions for the highest order element(HEX125) are used to fit an expression of the form
\begin{equation}
|E_0 - E_h| =  \mathcal{C} (1/N_e)^{2k/3}.
\end{equation}
to determine the constants $E_0$, $\mathcal{C}$ and $k$. All convergence study plots for different orders of finite-elements in the subsequent sub-sections show the relative error $\frac{|E_0-E_h|}{|E_0|}$ plotted against $(1/N_e)^{1/3}$ for the value of $E_0$ obtained using the HEX125 element. The slopes of these curves provide the rate of convergence of the finite-element approximation error in energy for the system being studied.

\subsubsection{Hydrogen Atom}
We first begin with the simplest system to study, namely the Hydrogen atom. Note that the orbital-free DFT functional in the case of a single-electron system takes a simplified form due to the absence of electron-electron interactions. Further, the kinetic energy of the single-electron system can be represented exactly by the von-Weizsacker kinetic energy functional with $\lambda = 1$, while the electrostatic energy is solely comprised of nuclear-electron interactions. An analytical solution for the root-electron-density for Hydrogen atom is known, and the radial solution is given by
\begin{equation}
\bar{u}(r) = \sqrt{\frac{1}{\pi}}\exp{(-r)}\,.
\end{equation}
In this case, using the theoretical solution for $\bar{u}$, the optimal mesh coarsening rate as determined by equation~\eqref{optimmesh} is given by the following expression
\begin{equation}\label{hyd_optim}
h(r) = A\exp\left(\frac{2\,r}{2k+3}\right)\,.
\end{equation}
Using this coarse-graining rate, we perform the numerical convergence study with both tetrahedral and hexahedral elements and the results are shown in figure~\ref{fig:SingleHydConvgRate}. In the present case, since the ground-state energy of Hydrogen is known theoretically, this theoretical value of -0.5 Hartree is used for $E_0$ in computing the relative errors in energy.
 	\begin{center}
		\includegraphics[width=0.6\textwidth]{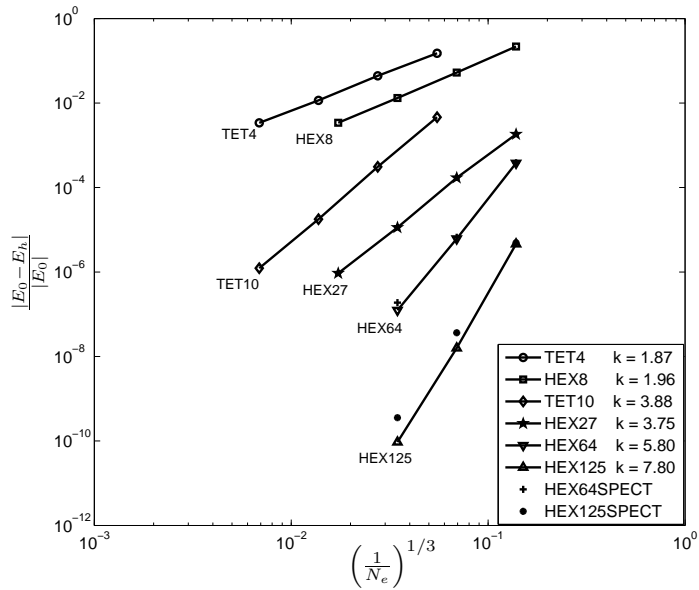}
		\captionof{figure}{Convergence rates for the finite-element approximation of a single Hydrogen atom using orbital-free DFT.}
		\label{fig:SingleHydConvgRate}
	\end{center}
Results show that all the elements have close to optimal rate of convergence, $O(h^{2k})$, where $k$ is the degree of the polynomial. It is interesting to note that optimal rates of convergence are obtained although the nuclear potential approaches a singular solution (Coulomb potential) upon refinement. We note that, while both hexahedral and tetrahedral elements with same interpolation orders converge in error at similar rates, hexahedral elements provide better accuracies for a given mesh size (HEX8 and HEX27 elements are more accurate than TET4 and TET10 elements respectively). Secondly, as the relative accuracy of calculations increase, elements with linear interpolation soon become untenable due to the large increase in number of elements for a small drop in error. For instance, the computation of energy of a single Hydrogen atom to a relative error of $10^{-2}$ required over a million linear TET4 elements in comparison to just a few thousand elements with quadratic TET10 elements. Further, relative errors up to $10^{-4}$ are achieved with just few hundreds of HEX64 and HEX125 elements, and even higher accuracies are achieved with few thousands of these elements.

\subsubsection{Helium Atom}
The next example we study is a single Helium atom, the simplest system displaying the full complexity of an all-electron calculation. The kinetic energy is treated using the TFW functional with $\lambda=\frac{1}{5}$, and the exchange-correlation effects are treated using the LDA approximation. In order to determine the mesh coarse-graining rate for this problem, we note that the asymptotic solution for the electron-density in the TFW model is governed by an exponential decay from the nucleus~\cite{parr} and is given by
\begin{equation}
\rho(r) \sim \frac{1}{r^2}\exp\Bigl[-\sqrt{\frac{8\,\bar{\mu}}{\lambda}} \;r\Bigr]\;\;\;\;\text{for}\;\;\; r\rightarrow\infty\,,
\end{equation}
where $\bar{\mu}$ is the lagrange multiplier associated with the constraint in the orbital-free DFT problem. Since the decay of electron-density is dominantly influenced by the exponential term, the root-electron-density $\bar{u}(r)$ for the purpose of determining the mesh coarse-graining rate is assumed to be
\begin{equation}
\bar{u}(r) = \sqrt{\frac{2\xi^3}{\pi}} \exp(-\xi\,r) \;\;\;\;\text{where}\;\;\;\;\xi = \frac{1}{2}\sqrt{\frac{8\,\bar{\mu}}{\lambda}}\,.
\end{equation}
The electrostatic potential is governed by the Poisson equation with total charge density being equal to the sum of $\bar{u}^{2}(r)$ and $2\delta(r)$ and is given by
\begin{equation}
\bar{\phi}(r) = -2 \exp{(-2\xi \,r)}\left(\xi + \frac{1}{r}\right)\,.
\end{equation}
Using the above equations, the mesh coarse-graining rate from equation~\eqref{optimmesh}  is given by
\begin{equation}\label{hel_optim}
h(r) = A\left[\frac{2}{\pi}\xi^{2k+5} \exp{(-2\,\xi\,r)} + 4 \exp{(-4\,\xi\,r)}\left[\xi^{k+2} 2^{k+1} + \sum_{n=0}^{k+1} \dbinom{k+1}{n} \frac{2^{n} \xi^{n} (k+1-n)!}{r^{k-n+2}}\right]^{2}\right]^{-1/(2k+3)}\,.
\end{equation}
Note that $\bar{\mu}$ which appears in the above equation is a parameter not known {\it a priori}. Hence, we find the value of $\bar{\mu}_h$ on a coarse mesh initially and then use in the above equation to obtain $h(r)$. Using this coarse-graining rate, we perform the numerical convergence study with both tetrahedral and hexahedral elements as explained before.
Figure~\ref{fig:SingleHeliumConvgRate} shows the convergence results for the various elements. As in the case of Hydrogen atom, we obtain close to optimal convergence rates although the governing equations are non-linear in nature and the nuclear potential approaches a singular solution upon refinement. Recent mathematical analysis~\cite{ortner} shows that the finite-element approximation for the orbital-free DFT problem does provide optimal rates of convergence, however, this analysis assumes that the nuclear potentials are smooth. To the best of our knowledge mathematical analysis of the finite-element approximation of the orbital-free DFT problem with singular nuclear potentials is still an open question. The present numerical investigation provides evidence that optimal rates of convergence are achieved for the non-linear orbital-free DFT problem even with singular potentials.
\begin{center}
\includegraphics[width=0.6\textwidth]{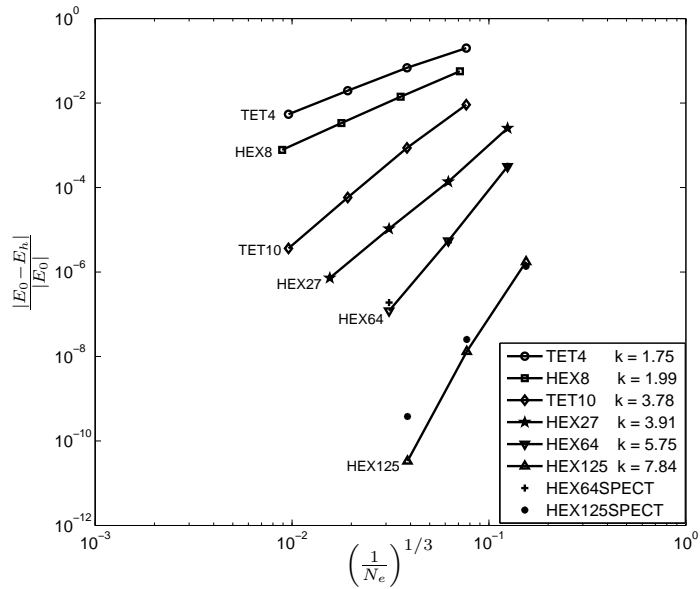}
\captionof{figure}{Convergence rates for the finite-element approximation of a single Helium atom using orbital-free DFT.}
\label{fig:SingleHeliumConvgRate}
\end{center}

\subsubsection{Pseudopotential Calculations}
We now turn to  pseudopotential calculations in multi-electron systems. A pseudopotential constitutes the effective potential of the nucleus and core electrons experienced by the valence electrons. In solid state physics, since the valence electrons are mainly responsible for chemical bonding and the chemical and physical properties of solids, it is common to compute material properties under the pseudopotential approximation. Further, as the electron-density corresponding to the valence electrons is closer to a free-electron gas, the model kinetic energy functionals in orbital-free DFT are better approximations for pseudopotential computations in comparison to all-electron calculations.

\paragraph* {Single Atom:}
The first example in this category is that of a single Aluminum atom. Here, we use the TFW kinetic energy functionals with $\lambda=\frac{1}{5}$, the LDA approximation for exchange correlation functional, and  the Goodwin-Needs-Heine pseudopotential~\cite{goodwin1990}.
The mesh coarsening rate is derived in the same way as described in the case of Helium atom. Since the decay of the electron density is dominantly influenced by the exponential term, the root-electron-density $\bar{u}(r)$ for this problem is assumed to be of the form
\begin{equation}
\bar{u}(r) = \sqrt{\frac{3\xi^3}{\pi}} \exp(-\xi\,r)\,,\;\;\;\;\text{where}\;\;\;\;\xi = \frac{1}{2}\sqrt{\frac{8\,\bar{\mu}}{\lambda}}\,.
\end{equation}
The electrostatic potential is governed by the Poisson equation with total charge being equal to the sum of  $\bar{u}^{2}(r)$ and $3\delta(r)$ and is given by
\begin{equation}
\bar{\phi}(r) = -3 \exp{(-2\xi \,r)}\left(\xi + \frac{1}{r}\right)\,.
\end{equation}
Using the above equations, the mesh coarse-graining rate from equation~\eqref{optimmesh} is given by
\begin{equation}\label{al_optim}
h(r) = A\left[\frac{3}{\pi}\xi^{2k+5} \exp{(-2\,\xi\,r)} + 9 \exp{(-4\,\xi\,r)}\left[\xi^{k+2} 2^{k+1} + \sum_{n=0}^{k+1} \dbinom{k+1}{n} \frac{2^{n} \xi^{n} (k+1-n)!}{r^{k-n+2}}\right]^{2}\right]^{-1/(2k+3)}\,.
\end{equation}
Figure~\ref{fig:SingleAlumConvgRate} shows the rates of convergence for the various elements considered, which exhibit close to optimal rates of convergence. The main observation that distinguishes this study from the all-electron study (Helium and Hydrogen atoms) is that all orders of interpolation provide much greater accuracies for pseudopotential calculations for the same number of elements. Linear basis functions are able to approximate the energy to two orders of greater accuracy. Ground state energies within relative errors of $10^{-6}$ can be achieved even with quadratic elements like TET10 and HEX27. The reason for this improved accuracy can be attributed to the smooth pseudopotential field in contrast to the singular nuclear potential field describing interactions between the nuclei and electrons.

	\begin{center}
	\includegraphics[width=0.6\textwidth]{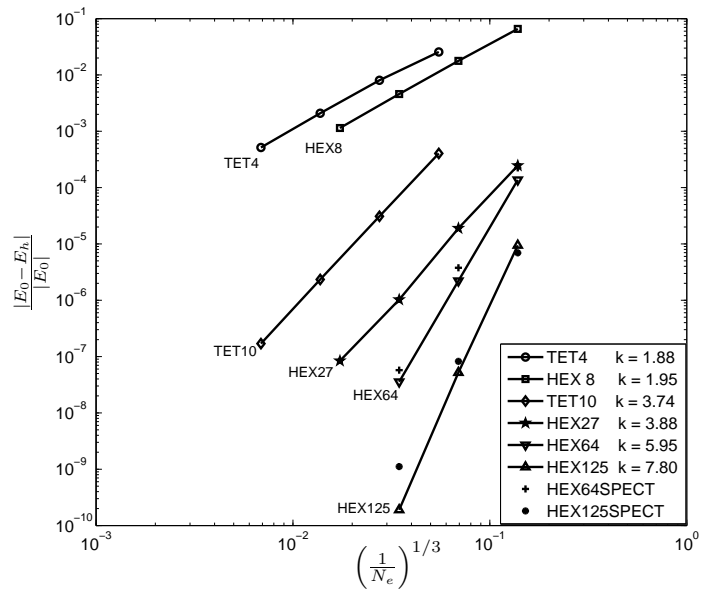}
	\captionof{figure}{Convergence rates for the finite-element approximation of a single Aluminum atom using orbital-free DFT.}
	\label{fig:SingleAlumConvgRate}
	\end{center}

\paragraph*{Perfect crystal with periodic boundary conditions:}
The next example considered is that of a perfect Aluminium crystal. Periodic boundary conditions are imposed on the root-electron-density and the electrostatic potential on a single face-centered cubic unit cell. The kinetic energy functional is modeled using kernel energies in this case. The details of the local reformulation with kernel energies have been discussed in Section \ref{formulation}. Figure \ref{fig:DIAlumConvgRate} shows the rate of convergence of the energy error using the density independent (DI) kernel functionals. We note that the convergence trend is similar to that of a single atom calculation with linear basis functions approximating the energy to a better accuracy in comparison to an all-electron calculation for the same number of elements. Simulations have also been performed to demonstrate the optimal convergence rates using density dependent (DD) kernels and are shown in figure~\ref{fig:DDAlumConvgRate}.

\begin{center}
	 \includegraphics[width=0.6\textwidth]{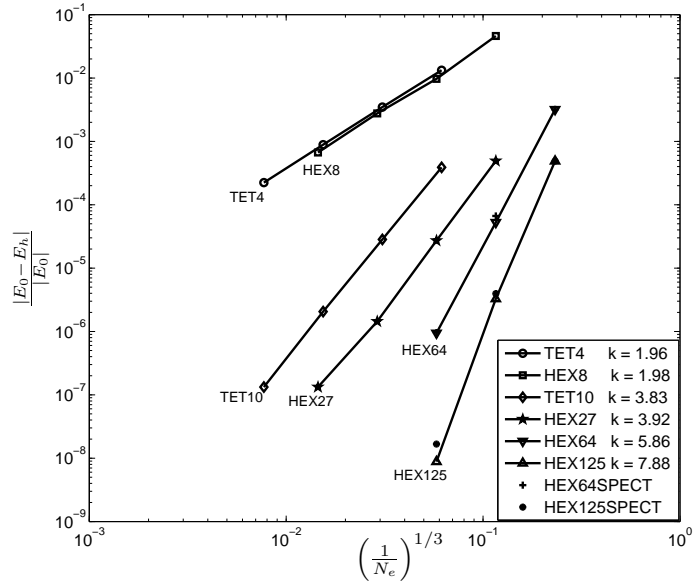}
	\captionof{figure}{Convergence rates for the finite-element approximation of bulk Aluminum using orbital-free DFT with DI kernel energy.}
	\label{fig:DIAlumConvgRate}
\end{center}

\begin{center}
	 \includegraphics[width=0.6\textwidth]{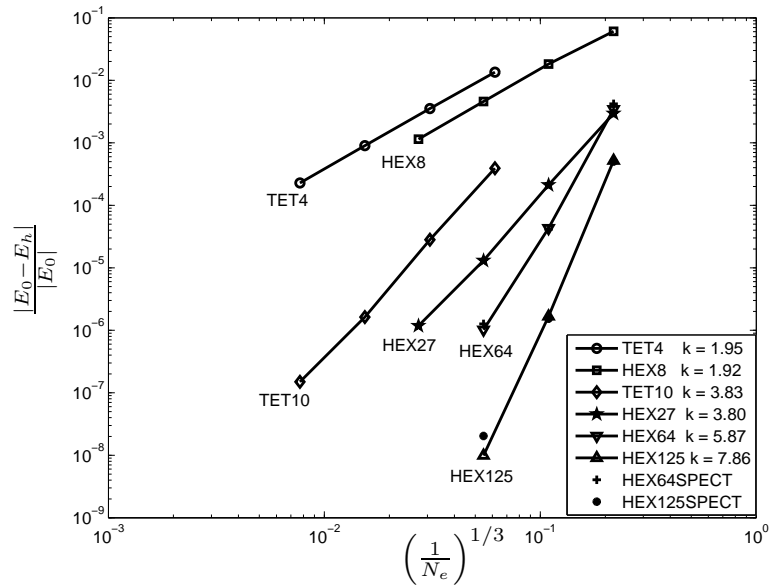}
	\captionof{figure}{Convergence rates for the finite-element approximation of bulk Aluminum using orbital-free DFT with DD kernel energy.}
	\label{fig:DDAlumConvgRate}
\end{center}

\subsection{Computational Cost}
We now turn towards studying the computational efficiency afforded by the use of higher-order finite-element approximations in the orbital-free DFT problem. As seen from the convergence studies, higher-order finite-element approximations result in a faster rate of convergence. However, the need for higher-order quadratures for numerical approximation of the integrals in the finite-element formulation increases the per element cost of computations in using higher-order elements. Further, the increase in the bandwidth of the Hessian matrix affects the performance of the iterative solvers. In order to unambiguously determine the computational efficiency of higher-order elements, we measure the CPU time taken for the simulations conducted on benchmark problems for a wide range of meshes providing different relative accuracies. We also conduct this investigation on larger problems comprising of Aluminum clusters with varying sizes. All the simulations are conducted using meshes with the coarse-graining rates determined by equation~\eqref{optimmesh}. All the numerical simulations reported in this work are conducted using a parallel implementation of the code based on MPI and were executed on a parallel computing cluster with following specifications: (2) dual-core AMD Opteron2218 2.6GHz processors per node, 8 GB memory per node and 1Gbps ethernet connectivity between nodes. The various single atom calculations were executed using 1 to 8 processors (cores), while the results for the larger aluminium clusters discussed subsequently were executed on 40 processors. It was verified that code scales linearly on this parallel computing architecture and hence the times reported here represent the total CPU time for the calculation, which is equivalent to the wall-clock time on a single processor. \cn

\subsubsection{Single Atom Calculations}\label{sec:comp_atom}
A series of meshes for each element type are constructed using the optimal mesh coarse-graining rates for the benchmark systems considered (cf. equations~\eqref{hyd_optim},\eqref{hel_optim} and \eqref{al_optim}). The number of elements are varied to obtain finite-element approximations with varying accuracies that target relative energy errors between $10^{-1}-10^{-6}$. These meshes are used to solve the orbital-free DFT problem for the benchmark problems comprising of Hydrogen, Helium and Aluminum atoms. For the various finite-element interpolations considered in the present study, the relative energy error is plotted against the total CPU time for simulations on the series of meshes constructed. The CPU times are normalized with the longest time taken in the series of simulations for a given material system. Figures~\ref{fig:SingleHydTime}, \ref{fig:SingleHeliumTime} and \ref{fig:SingleAlumTime} show these results for Hydrogen, Helium and Aluminum atom systems respectively.

Our results indicate that higher-order interpolations become computationally more efficient as the desired accuracy of the computations increase, in particular for relative errors of the order of $10^{-4}$ and lower, which corresponds to chemical accuracy. Even at moderate accuracies, HEX8 elements are ten-fold computationally more efficient than TET4 elements. For relative errors of $10^{-2}-10^{-3}$, quadratic HEX27 elements perform similar to cubic HEX64 elements, whereas quartic HEX125 elements take relatively longer times clearly showing the trade-off between accuracy and computational cost. Nevertheless, all higher-order elements are computationally more efficient than linear TET4 elements. As, we examine the relative errors lower than $5\times 10^{-5}$, quartic HEX125 elements offer a ten-fold computational savings in comparison to cubic HEX64 elements, while other low-order elements start deteriorating in terms of performance. An interesting observation is that the computational advantage offered by the spectral-elements is at least two-fold in the case of quartic HEX125Spectral element in comparison with quartic HEX125 element. We attribute the improved performance of the spectral-elements to the better conditioning of the discrete system of equations. Comparing the results across different materials systems, we observe that the performance of lower-order elements is far worse in the case of Hydrogen and Helium atom systems in comparison to Aluminum atom system. For instance, at a relative error of $10^{-3}$, the solution time using TET4 is more than three orders of magnitude larger than HEX125Spectral elements for the case of Helium atom. However, the solution time is only two orders of magnitude larger for TET4 over HEX125Spectral elements for Aluminium atom.

In summary, we note the following major observations from this investigation. Firstly, for a given level of accuracy, hexahedral elements converge faster than tetrahedral elements with the same order of interpolation. Secondly, for chemical accuracy corresponding to relative errors lower than $5\times 10^{-5}$, the computational efficiency improves significantly with the order of the element. Moreover, spectral-elements are computationally more efficient in comparison to the conventional Lagrange elements of the same order. As further evidence, we note that the simulations on Hydrogen and Helium atoms show a staggering 100-1000 fold computational savings that are realized with the use of HEX125Spectral elements providing a relative error of $10^{-4}$ in comparison to linear TET4 elements resulting in a much higher relative error of around $10^{-2}$. Lastly, qualitatively speaking, the sequence of graphs of relative error vs. normalized CPU time for the various elements tend towards increasing accuracy and computational efficiency with increasing order of finite-element interpolation. We also note that the point of diminishing returns in terms of computational efficiency of higher-order elements for relative errors commensurate with chemical accuracy is beyond the fourth-order, and is yet to be determined.
	
   \begin{center}
	\includegraphics[width=0.65\textwidth]{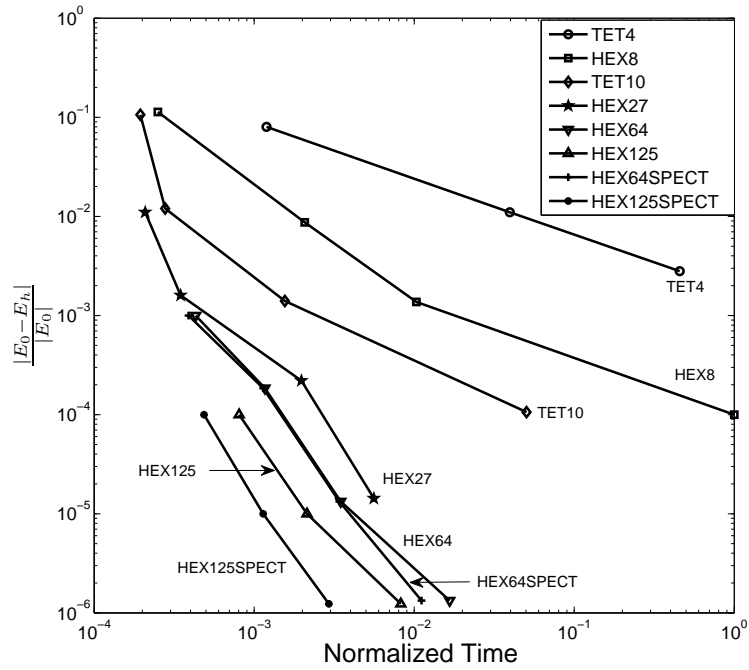}
	\captionof{figure}{Computational efficiency of various orders of finite-element approximations. Case study: Hydrogen atom.}
	\label{fig:SingleHydTime}
	\end{center}	

    \begin{center}
	\includegraphics[width=0.65\textwidth]{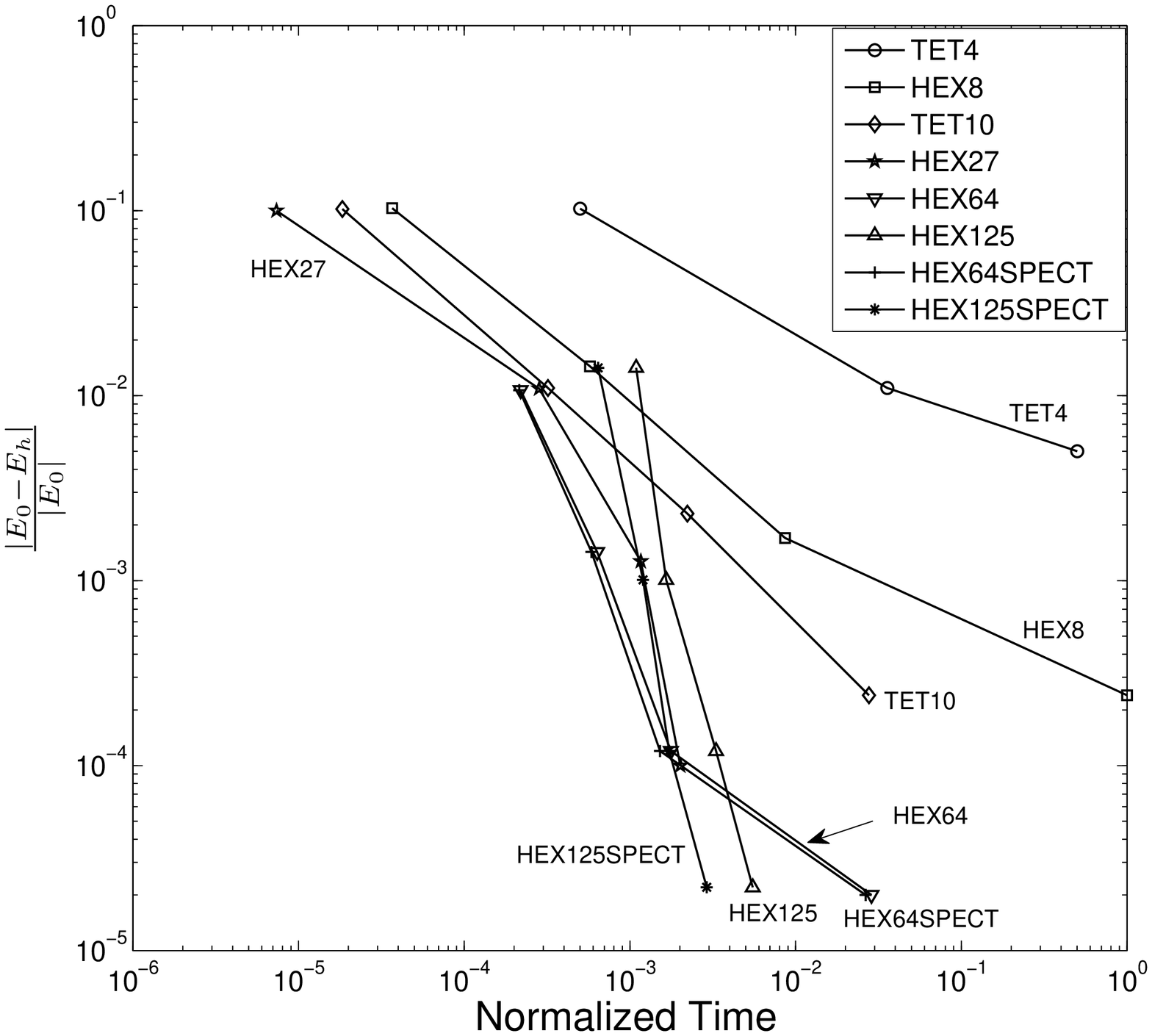}
	\captionof{figure}{Computational efficiency of various orders of finite-element approximations. Case study: Helium atom.}
	\label{fig:SingleHeliumTime}
	\end{center}

	\begin{center}
	\includegraphics[width=0.65\textwidth]{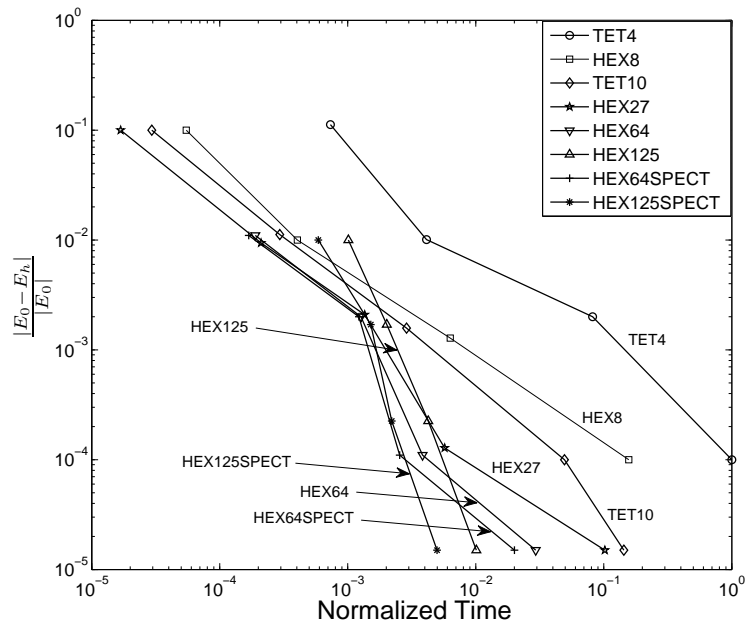}
	\captionof{figure}{Computational efficiency of various orders of finite-element approximations. Case study: Aluminum atom.}
	\label{fig:SingleAlumTime}
	\end{center}

\subsubsection{Large Aluminum Clusters}\label{AlClusters}
In order to further investigate the computational efficiency afforded by higher-order elements in orbital-free DFT, we consider larger materials systems comprising of Aluminum clusters containing 1 x 1 x 1,  3 x 3 x 3,  5 x 5 x 5  face-centered cubic unit cells. We restrict this investigation to studying the computational efficiency of linear TET4 and quartic HEX125Spectral elements, which constitute the extreme ends of the spectrum of elements studied in this work. The finite-element meshes for these clusters were constructed to be uniform in the cluster region, while coarse-graining away from the cluster region. The mesh coarsening rate in the vacuum is determined by using in equation~\eqref{optimmesh} the asymptotic solution of the far-field electronic fields, estimated as a superposition of single atom far-field asymptotic fields given by equation~\eqref{al_optim}. Figure~\ref{fig:ClusterAlumTime} shows the relative errors vs. the normalized CPU time for the various simulations conducted. The reference time taken here is that of a 3x3x3 aluminium cluster simulated using TET4 elements which took about 3000 CPU hours (approximately 75 hours on 40 AMD Opteron 2.6 GHz processors), and the relative error in the energy for this simulation is of the order $10^{-2}$. As can be observed from the results, quartic HEX125Spectral element is over hundred-fold more computationally efficient and provides at least an order of magnitude greater accuracy than the linear TET4 element. Further, it is interesting to note that the simulation performed on a 5x5x5 aluminium cluster (666 atoms) using quartic HEX125Spectral elements takes only marginally more time than a 1x1x1 cluster (14 atoms) discretized with linear TET4 elements, while providing two orders-of-magnitude greater accuracy. These results demonstrate that the computational efficiency of higher-order elements observed in single atomic systems in section~\ref{sec:comp_atom} also holds for larger materials systems with varying sizes, and corroborates the advantage of using higher-order spectral-elements in orbital-free DFT calculations.
\begin{center}
\includegraphics[width=0.7\textwidth]{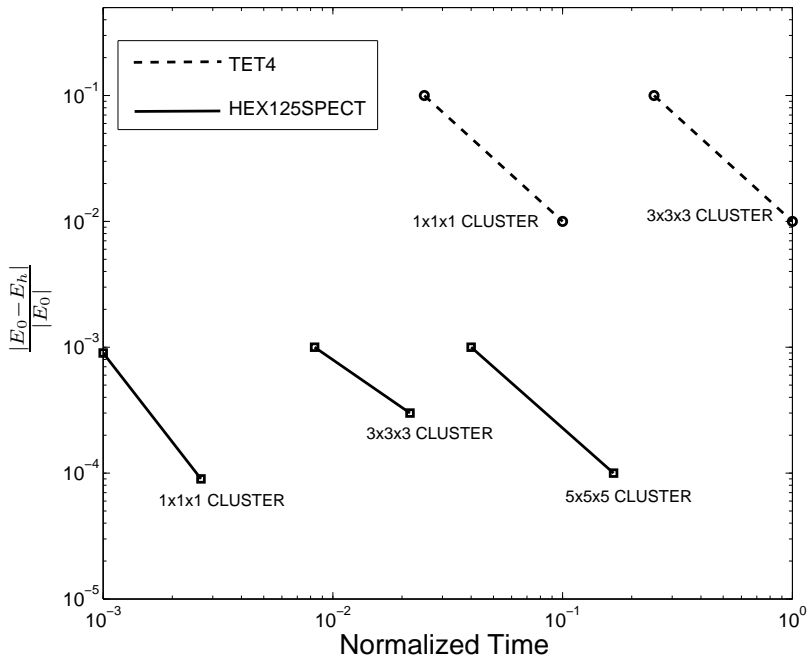}
\captionof{figure}{Computational efficiency of various orders of finite-element approximations. Case study: Aluminum clusters.}
\label{fig:ClusterAlumTime}
\end{center}

\section{Conclusions}\label{concl}
In the present study, we conduct a numerical analysis of the finite-element discretization of the orbital-free DFT problem with particular focus on evaluating the computational efficiency afforded by the use of higher-order finite-element discretizations. We used the saddle-point formulation of the orbital-free DFT problem proposed in~\cite{ofdft} as our starting point. Since a higher-order finite-element discretization of the saddle-point problem allows for the possibility of using different orders of interpolation for various electronic fields, namely the potential fields and the root-electron-density, we investigated robustness of two possible solution schemes: (i) a simultaneous scheme where all the electronic fields are solved concurrently; (ii) a staggered approach, where the root-electron-density is solved using a quasi-Newton based iterative solver, while the potential fields are computed consistently for every trial root-electron-density. Our analysis of the solvability conditions indicated that, while the discrete system of equations in the staggered approach is invertible for any choice of interpolations of the electronic fields, the discrete equations in simultaneous scheme are not solvable if the dimension of the vector-space approximating root-electron-density far exceeds that of electrostatic potential and the Hessian matrix corresponding to root-electron-density is rank deficient to a certain degree which depends on the initial guess. This non-invertibility of the discrete set of equations corresponding to the simultaneous scheme is an inherent deficiency of saddle-point problems which cannot be surpassed. Thus, while the staggered scheme is robust, the simultaneous scheme is sensitive to the initial guess, an aspect which was also observed in our numerical simulations.

In order to aid our studies on the computational efficiency of higher-order finite-elements in orbital-free DFT, we developed error estimates for the approximation error in energy in terms of the ground-state electronic fields and characteristic mesh-size. Using these error estimates and the {\it a priori} knowledge of the asymptotic solutions of far-field electronic fields, we constructed mesh coarse-graining rates for the various benchmark problems, which include single-atom systems comprising of Hydrogen, Helium and Aluminum atoms. We first investigated the performance of higher-order elements by studying the convergence rates of various elements up to fourth-order including tetrahedral elements, hexahedral elements and spectral finite-elements. In all cases, we observed close to optimal convergence rates. It is indeed worthwhile to note that optimal convergence is obtained although the orbital-free DFT problem is non-linear and the potential fields in Hydrogen and Helium atoms are singular. Having demonstrated the optimal convergence, we subsequently investigated the computational efficiency afforded by the use of higher-order elements. To this end, we used mesh coarse-graining rates determined from the proposed mesh adaption scheme and studied the CPU time required to solve the benchmark problems as well as larger systems containing up to 666 atom Aluminum clusters. Our results demonstrate that significant computational savings can be realized by using higher-order elements. For instance, a staggering 100-1000 fold savings in terms of CPU time and two orders of magnitude better accuracy are realized by using fourth-order hexahedral spectral-elements in comparison to linear tetrahedral elements.

The prospect of using higher-order finite-elements as basis functions for electronic structure calculations is indeed promising. While finite-elements have the advantages of handling complex geometries and boundary conditions and scale well on massively parallel computing architectures, their use has been limited in electronic structure calculations as they compare unfavorably to plane-wave and atomic-orbital basis functions due to the large number of basis functions required to reach chemical accuracy. The present study shows that the use of higher-order discretizations can alleviate this problem, and presents a useful direction for electronic structure calculations using finite-element discretization. As part of the ongoing work, we are currently investigating the efficiency of these higher-order discretizations for the solution of Kohn-Sham DFT equations, and their performance in comparison to plane-wave and atomic-orbital basis functions. Further, the implications of using higher-order finite-element approximations in the recently proposed quasi-continuum reduction of orbital-free DFT~\cite{qcofdft} is a worthwhile subject for future investigation.

\section*{Acknowledgements}
We thank Kenneth Leiter (U.S. Army Research Labs) for assistance in improving the software architecture of the code that is used to conduct our simulations. We gratefully acknowledge the support of Air Force Office of Scientific  Research under Grant No. FA9550-09-1-0240 and Army Research Office under Grant No. W911NF-09-0292.

\appendix
\section{Solvability of the Discrete FE Problem}\label{solve}
For the sake of clarity and simplicity we first present the solvability conditions for the discrete problem~\eqref{fem}. Similar arguments can be used to extend the analysis in the case of kernel potentials (equation~\eqref{fem_kernel}).
If $X_h^{u}$ and $X_h^{\phi}$ represent the finite-dimensional subspaces with dimensions $n_u$ and $n_\phi$, the finite-element approximation for the field variables in problem~\eqref{infsup} is given by
\begin{align}
u_{h}(\bx) &= \sum_{k=1}^{n_u}N_{k}^{u}(\bx)u_k,\\
\phi_{h}(\bx) &= \sum_{k=1}^{n_\phi}N_{k}^{\phi}(\bx) \phi_k,
\end{align}
where  $N_{k}^{u}:1\leq k \leq n_u$ is the basis of $X_h^{u}$,  $N_{k}^{\phi}:1\leq k \leq n_\phi$ is the basis of $X_h^{\phi}$,  and $u_k$, $\phi_k$ are the nodal variables associated with the root-electron-density and the electrostatic potential. Let
\begin{equation}
L'(u_h,\phi_h,\mu_h) = L(u_h,\phi_h)+\mu_h(\intomega u_h^2 \dx-N)\,,
\end{equation}
where $L'(u_h,\phi_h,\mu_h)$ denotes the Lagrangian that imposes the constraint on the total number of electrons in the system via the Lagrange multiplier $\mu_h$.
The stationarity conditions associated with the discrete saddle-point problem in~\eqref{fem} are given by
\begin{subequations}
\begin{align}
\parder {L'} {u_i}(u_h,\phi_h,\mu_{h}) &= 0 \qquad i=1\ldots n_u\,,\\
\parder {L'} {\phi_i}(u_h,\phi_h) &= 0 \qquad i=1\ldots n_\phi\,, \\
\parder {L'} {\mu_{h}} (u_h) &= 0\,. \label{stn}
\end{align}
\end{subequations}
The above equations reduce to the following set of nonlinear equations:
\begin{subequations}\label{forceu}
\begin{align}
\begin{split}
&\sum_{k=1}^{n_u}\Bigl[\intomega \Bigl(\frac{\lambda}{2}\del\shp u k \cdot \del \shp u i + \shp u k \shp u i \phi_{h}(\bx)  \Bigr) \dx \Bigr]u_k + \intomega \frac{1}{2}F'(u_{h}) \shp u i  \dx \\
  &+ \mu_{h}\sum_{k=1}^{n_u}\; \left[ \intomega \shp u k \shp u i \dx \right]u_k  = 0 \qquad i=1\ldots n_u\,,
\end{split}
\end{align}
%\begin{equation}\label{forceu}
%\sum_{k=1}^{N_h}\Bigl[\intomega \Bigl[\frac{\lambda}{2}\del\shp u k \cdot \del \shp u j + \shp u k \shp u j \phi_h(\bx)  + \mu \shp u k \shp u j \Bigr] \dx \Bigr]u_k + \intomega \frac{1}{2}F'(u_h) \shp u j  \dx = 0
% \end{equation}
\begin{align}\label{forcephi}
&\sum_{k=1}^{n_\phi} \Bigl[\frac{1}{4\pi} \intomega \del \shp \phi i \del \shp \phi k \dx\Bigr]\phi_k - \intomega \left({u^2_{h}(\bx) }+ b(\bx,\bR)\right)\shp \phi i = 0 \qquad i=1\ldots n_\phi\,,
\end{align}
\begin{align}\label{forcemu}
&\sum_{s=1}^{n_u} \sum_{r=1}^{n_u} \Bigl[ \intomega \shp u r \shp u s \dx\Bigr]u_r u_s - N = 0\,,
\end{align}
\end{subequations}
where $F(u)$ is given by
\begin{equation*}
F(u) = C_F\;u^{10/3} +  \varepsilon_{xc}(u^2)u^2\,.
\end{equation*}
\subsection{Simultaneous Solution Procedure}
We now examine the solvability of  the above  system of  nonlinear equations in a simultaneous approach in which all the fields are solved in an iterative scheme using quasi-Newton methods. Firstly, the discrete set of nonlinear equations are linearized about the $n^{th}$ iteration, and the resulting incremental problem corresponds to a single step within the iterative scheme for the nonlinear problem. We argue that if each iterative update is uniquely solvable, then the scheme is well behaved. Denoting $(u_h^{(n)},\phi_h^{(n)},\mu_h^{(n)})$ to be the trial solution fields at the end of $n^{th}$ iteration, the linearizations of the stationarity conditions, neglecting second-order terms and beyond, are given by
\begin{subequations}\label{linstn}
\begin{align}
\parder {L'} {u_i}(u_h^{(n+1)}, \phi_h^{(n+1)}, \mu_{h}^{(n+1)}) =& \parder {L'} {u_i}\Big{|}_{(u_h^{(n)}, \phi_h^{(n)}, \mu_{h}^{(n)})} + \sum_{j=1}^{n_u} \parsecder {L'} {u_i} {u_j}\Big{|}_{(u_h^{(n)}, \phi_h^{(n)}, \mu_{h}^{(n)})} \delta u_j \notag\\
&+ \sum_{j=1}^{n_\phi} \parsecder {L'} {u_i} {\phi_j}\Big{|}_{u_h^{(n)}}  \delta \phi_j + \parsecder {L'} {u_i} {\mu_{h}}\Big{|}_{u_h^{(n)}} \delta \mu_{h} \quad i=1\ldots n_u\,,\\
\parder {L'} {\phi_i}(u_h^{(n+1)}, \phi_h^{(n+1)}) =& \parder {L'} {\phi_i}\Big{|}_{(u_h^{(n)}, \phi_h^{(n)})} + \sum_{j=1}^{n_u} \parsecder {L'} {\phi_i} {u_j}\Big{|}_{u_h^{(n)}} \delta u_j + \sum_{j=1}^{n_\phi} \parsecder {L'} {\phi_i} {\phi_j} \delta \phi_j \quad i=1\ldots n_\phi\,,\\
\parder {L'} {\mu_{h}}(u_h^{(n+1)}) =& \parder {L'} {\mu_{h}}\Big{|}_{u_h^{(n)}} + \sum_{j=1}^{n_u} \parsecder {L'} {\mu_{h}} {u_j}\Big{|}_{u_h^{(n)}} \delta u_j\,.
\end{align}
\end{subequations}
The corresponding matrix equation for the incremental solution at the $(n+1)^{th}$ iteration is given by
\begin{equation}\label{mateqn}
\bKbar \bxbar = \bffbar
\end{equation}
where
\begin{equation}
\bKbar = \left(
  \begin{array}{ccc}
    \bA^{(n)} & \bB^{(n)} & \bLam^{(n)} \\
    \bB^{(n)^{T}} & \bC & \mathbf{0} \\
    \bLam^{(n)^{T}} & \mathbf{0} & \mathbf{0} \\
  \end{array}
\right)\;\;\;\;\;\;\;\;\;
\bxbar = \left(
  \begin{array}{c}
    \delta \uhat \\
    \delta \phihat \\
    \delta  \mu \\
  \end{array}
\right)\;\;\;\;\;\;\;
\bffbar = \left(
  \begin{array}{c}
    \fhat^{(n)} \\
    \ghat^{(n)} \\
    h^{(n)} \\
  \end{array}
\right)
\end{equation}
and the matrix elements
\begin{equation}
A^{(n)}_{ij} = \parsecder {L'} {u_i} {u_j}\Big{|}_{(u_h^{(n)}, \phi_h^{(n)}, \mu_{h}^{(n)})}\,,\;\; B^{(n)}_{ij} = \parsecder {L'} {u_i} {\phi_j}\Big{|}_{u_h^{(n)}}\,,\;\; \Lambda^{(n)}_i = \parsecder {L'} {u_i} {\mu_{h}}\Big{|}_{u_h^{(n)}}\,,\;\;C_{ij} = \parsecder {L'} {\phi_i} {\phi_j}\,,
\end{equation}
are given by
\begin{subequations}
\begin{align}
\begin{split}
A_{ij} &= \frac{\lambda}{2}\intomega \del \shp u i \del \shp u j \dx + \frac{1}{2} \intomega F'' (u_{h}(\bx)) \shp u i \shp u j \dx\\
& + \intomega \phi_{h}(\bx)\shp u i \shp u j \dx + \mu_{h} \intomega \shp u i \shp u j \dx\,,
\end{split} \\
B_{ij} &= \intomega u_{h}(\bx) \shp u i \shp \phi j \dx \,, \\
\Lambda_{i} &= \intomega u_{h}(\bx) \shp u i \dx \,, \\
C_{ij} &= -\frac{1}{8\pi} \intomega \del \shp \phi i \del \shp \phi j \dx\,. \label{mat}
\end{align}
\end{subequations}
The components of force vectors
\begin{equation}
\hat{f}^{(n)}_{i} = -\parder {L'} {u_i}\Big{|}_{(u_h^{(n)}, \phi_h^{(n)}, \mu_{h}^{(n)})}\,, \;\;\;\;\hat{g}^{(n)}_{i} = -\parder {L'} {\phi_i}\Big{|}_{(u_h^{(n)}, \phi_h^{(n)})}\,,\;\;\;\;h^{(n)} = -\parder {L'} {\mu}\Big{|}_{u_h^{(n)}}\,,
\end{equation}
represent the forcing terms in the linearized stationarity conditions in~\eqref{linstn}.

The system matrices, $\bA^{(n)}$, $\bB^{(n)}$, $\bC$, $\bLam^{(n)}$, and vectors $\fhat^{(n)}$ , $\ghat^{(n)}$ and $h^{(n)}$ have the dimensions $n_u \times n_u$, $n_u \times n_\phi$, $n_\phi \times n_\phi$, $n_u \times 1$, and $n_u \times 1$, $n_\phi \times 1$, $1 \times 1$ respectively.
We observe that the  matrix $\bKbar$ is symmetric since $\bA^{(n)}$ and $\bC$ are symmetric, but it is indefinite and this relates to the saddle-point property of the two-field formulation~\eqref{fem}. Further, we note from equation~\eqref{mat} that $\bC$ is the discrete analog of the negative Laplace operator, and is hence symmetric negative definite and invertible. Also, $\bB^{(n)}$ is symmetric if function spaces $X_h^{u}$ and $X_h^{\phi}$ have the same basis.

The primary requirement for the robustness of the iterative solution is that $\bKbar$  must be non-singular at every iteration, which will guarantee a unique incremental solution of~\eqref{mateqn} and as the norm of the forcing vector $\|\bff\|$ tends to zero, the only solution admitted by $\bKbar \bxbar = \bffbar$ is the $\bzero$ vector. In order to study the invertibility conditions of equation~\eqref{mateqn}, we first solve the second partition in $\bKbar$ matrix for $\delta \phihat $ to obtain
\begin{equation}
\delta \phihat = \bC^{-1}\ghat^{(n)} - \bC^{-1}{\bB^{(n)}}^{T}\delta \uhat\,,
\end{equation}
and substitute in the first partition to get
\begin{subequations}
\begin{align}
(\bA^{(n)} - \bB^{(n)}\bC^{-1}{\bB^{(n)}}^{T})\;\delta \uhat + \bLam^{(n)} \; \delta  \mu &= \fhat^{(n)} - \bB^{(n)}\bC^{-1} \ghat\,,\\
{\bLam^{(n)}}^{T} \delta \uhat &= h^{(n)}\,.
\end{align}
\end{subequations}
Dropping the superscript $n$ in the expressions to follow for notational simplicity, the corresponding matrix equation is given by
\begin{equation}
\bK \boldsymbol{x} = \bff\,,
\end{equation}
where
\begin{equation}
\bK= \left(
        \begin{array}{cc}
          \bH & \bLam \\
          \bLam^{T} & 0 \\
        \end{array}
      \right)\,,\;\;\;\;\;\;\;\;
\boldsymbol{x} = \left(
        \begin{array}{c}
          \delta \uhat \\
          \delta \mu \\
        \end{array}
      \right)\,,\;\;\;\;\;\;
\bff = \left(
        \begin{array}{c}
          \bar{\fhat} \\
          h \,\\
        \end{array}
      \right),
\end{equation}
and
\begin{equation}
\bH = \bA - \bB\bC^{-1}\bB^{T}\,, \;\;\;\;\;\;\; \bar{\fhat} = \fhat - \bB\bC^{-1} \ghat\,.
\end{equation}
We now derive the solvability conditions of the matrix $\bK$ arising in our formulation. For a general discussion on the solvability and stability of mixed finite-element formulations, we refer to ~\cite{brezzi}.
We begin by defining the kernel spaces of $\bLam^{T}$,  $\bH$ and the column space of $\bLam$ in the following way
\begin{align*}
&Ker(\bLam^{T}) = \{\bv \in \mathbb{R}^{n_u} | \bLam^{T}\bv = 0\}, \\
&Ker(\bH) = \{\bv \in \mathbb{R}^{n_u} | \bH\bv = 0\} \label{null_H},\;\;\;\text{and}\\
&R = \{\bv \in \mathbb{R}^{n_u} | \bv = p \bLam\;\; \mbox{for some}\;\; p\in \mathbb{R}\}.
\end{align*}
By definition, the dimension of $R$ is 1, and by using rank-nullity theorem on $\bLam^{T}$ we note that the dimension of $Ker(\bLam^{T})$ is $n_u-1$. Further, we note that every vector $\bq\in R$ is orthogonal to every vector $\bv\in Ker(\bLam^{T})$. Since $dim(R)+dim(Ker(\bLam^{T}))=n_u$, we note that $R$ denotes the subspace of $\mathbb{R}^{n_u}$ containing all vectors that are orthogonal to $Ker(\bLam^{T})$.\\

\noindent We now state the following propositions:
\begin{prop}\label{prop1}
Let $\br \in Ker(\bLam^{T})$, then the following statements are equivalent:
\begin{description}
  \item[(i)]  $\bs^{T}\bH\:\br = 0 \;\;\;\forall \bs \in Ker(\bLam^{T})  \Rightarrow \br = 0$
  \item[(ii)] $\bK$ is invertible
  \end{description}
\end{prop}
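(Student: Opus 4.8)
The plan is to exploit the fact that $\bK$ is square, so that its invertibility is equivalent to the triviality of its kernel, and to translate the nondegeneracy condition (i) into the statement that no nonzero vector of $Ker(\bLam^{T})$ can be completed to a null vector of $\bK$. I would prove the two implications separately, relying throughout on the two structural facts recorded just above the proposition: that $dim(R)=1$ (hence $\bLam\neq\bzero$) and that $R$ is precisely the orthogonal complement of $Ker(\bLam^{T})$ in $\mathbb{R}^{n_u}$.

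For (i) $\Rightarrow$ (ii), I would show $Ker(\bK)=\{\bzero\}$. Writing a candidate null vector as $(\delta\uhat,\delta\mu)$, the equation $\bK\,(\delta\uhat,\delta\mu)^{T}=\bzero$ splits into $\bH\,\delta\uhat+\bLam\,\delta\mu=\bzero$ and $\bLam^{T}\delta\uhat=0$. The second relation places $\delta\uhat\in Ker(\bLam^{T})$. Left-multiplying the first relation by an arbitrary $\bs^{T}$ with $\bs\in Ker(\bLam^{T})$ and using $\bs^{T}\bLam=(\bLam^{T}\bs)^{T}=0$ eliminates the multiplier term, yielding $\bs^{T}\bH\,\delta\uhat=0$ for all such $\bs$. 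Hypothesis (i), applied with $\br=\delta\uhat$, then forces $\delta\uhat=\bzero$. Substituting back leaves $\bLam\,\delta\mu=\bzero$, and since $\bLam\neq\bzero$ this gives $\delta\mu=0$. Thus the kernel is trivial and $\bK$ is invertible.

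For (ii) $\Rightarrow$ (i), I would argue by contraposition: assuming (i) fails, I produce a nonzero null vector of $\bK$. Failure of (i) supplies a nonzero $\br\in Ker(\bLam^{T})$ with $\bs^{T}\bH\br=0$ for every $\bs\in Ker(\bLam^{T})$; this says exactly that $\bH\br$ is orthogonal to $Ker(\bLam^{T})$, hence $\bH\br\in R$, so $\bH\br=p\,\bLam$ for some scalar $p$. The vector $(\br,-p)$ is then nonzero (since $\br\neq\bzero$) and annihilated by $\bK$, because the top block gives $\bH\br-p\bLam=\bzero$ and the bottom block gives $\bLam^{T}\br=0$. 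Hence $\bK$ is singular, contradicting (ii).

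I expect the only genuinely delicate point to be the backward direction, specifically the identification $\bH\br\in R$: it rests on reading $\bs^{T}\bH\br=0\ \forall\,\bs\in Ker(\bLam^{T})$ as an orthogonality statement and then invoking $R=(Ker(\bLam^{T}))^{\perp}$ to extract the explicit multiplier $p$ that builds the null vector. Everything else is routine block elimination. I would also remark that the symmetry of $\bH$ is not actually needed for the equivalence as stated, since condition (i) already encodes the correct one-sided nondegeneracy on $Ker(\bLam^{T})$.
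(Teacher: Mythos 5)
Your proposal is correct and follows essentially the same route as the paper's proof: block elimination plus the hypothesis for the forward direction, and the contrapositive via the identification $\bH\br\in R=(Ker(\bLam^{T}))^{\perp}$ to build an explicit null vector for the backward direction. The only addition is your (correct) remark that symmetry of $\bH$ is not needed, which the paper does not make but which does not change the argument.
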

\begin{proof}
We shall first show (i) $\Rightarrow$ (ii) and then show $\sim$ (i) $\Rightarrow$ $\sim$ (ii).  Let (i) be true and  let
\begin{equation}
 \boldsymbol{x} = \left(
         \begin{array}{c}
           \bl \\
           m \\
         \end{array}
       \right) \;\;\text{satisfy}\;\; \bK\boldsymbol{x} = \bzero\,,
 \end{equation}
 i.e., we seek the solutions $\bl$ and $m$ satisfying the following equations
 \begin{align}
 \bH\bl + \bLam m &= 0 \label{proof1}\,,\\
 \bLam^{T} \bl &= 0 \label{proof2}\,.
 \end{align}
 It is clear that any vector $\bl$ which solves the above system belongs to the space $Ker(\bLam^{T})$. Taking dot product with some vector $\bs \in Ker(\bLam^{T})$ on both sides of the equation~\eqref{proof1}, we get
 \begin{equation}\label{cond}
\bs^{T}\bH\:\bl + m\bs^{T}\bLam = 0
\Rightarrow \bs^{T}\bH\:\bl + m(\bLam^{T}\bs) ^{T} = \bzero
\Rightarrow \bs^{T}\bH\:\bl = 0\,.
\end{equation}
Since $\bs^{T}\bH\:\bl = 0$ is true for any $\bs \in Ker(\bLam^{T})$, (i) implies that $\bl=0$ is the only solution possible to~\eqref{cond}. Since $\bLam$ is a column vector, $m=0$ follows from equation~\eqref{proof1}. Thus, $\bK\boldsymbol{x} = 0$ admits $\boldsymbol{x} = \mathbf{0}$ as the only solution, and therefore it follows that $\bK$ is invertible.

Next let us suppose that (i) is not true, which implies that there exists a non-zero vector $\bl \in Ker(\bLam^{T})$ satisfying $\bs^{T}\bH\:\bl = 0 \;\;\;\forall \bs \in Ker(\bLam^{T}) $. This in turn implies that the vector $\bH\:\bl$ lies in $R$, the column space of $\bLam$. Thus, one can find $m$ such that $\bH\bl + \bLam m = 0$, and since $\bl \in Ker(\bLam^{T})$, it follows that $\bLam^{T} \bl = 0$. Hence, a non-zero vector $\boldsymbol{x} = (\bl\;\;\; m)$ satisfies $\bK \boldsymbol{x} = \bzero$ from which it follows that $\bK$ is not invertible. Thus, it follows that the statements (i) and (ii) are equivalent.
\end{proof}
\begin{remark}
If statement (i) in proposition~\ref{prop1} is true, then $Ker(\bH)\cap Ker(\bLam^{T}) = {\O}$. Further, if $\bH$ is either symmetric positive semi-definite or negative semi-definite, then the statement (i) is equivalent to $Ker(\bH)\cap Ker(\bLam^{T}) = {\O}$.
\end{remark}
\noindent We now turn to examine the restrictions imposed by the simultaneous approach on the choice of the interpolation spaces $X_h^u$ and $X_h^{\phi}$. \\

\begin{prop}\label{prop2}
If $\bK$ is invertible, rank($\bA$) $\geq$ $ n_u-(n_\phi+1)$
\end{prop}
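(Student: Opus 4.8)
The plan is to derive the rank bound as a direct consequence of Proposition~\ref{prop1} and its Remark, combined with an elementary dimension count and a rank estimate for the Schur complement $\bH = \bA - \bB\bC^{-1}\bB^{T}$. First I would record a preliminary observation: if $\bK$ is invertible then necessarily $\bLam \neq \bzero$, since otherwise $\bK$ would contain an identically zero column and row. Hence $rank(\bLam^{T}) = 1$ and, as already established via the rank--nullity theorem, $\dim Ker(\bLam^{T}) = n_u - 1$. Invoking Proposition~\ref{prop1}, invertibility of $\bK$ is equivalent to statement (i), and the Remark immediately upgrades this to the clean algebraic fact
\[
Ker(\bH) \cap Ker(\bLam^{T}) = \{\bzero\}\,.
\]

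Next I would convert this trivial intersection into a lower bound on $rank(\bH)$ using the standard inequality for subspaces $U,V \subseteq \mathbb{R}^{n_u}$, namely $\dim(U \cap V) \geq \dim U + \dim V - n_u$. Taking $U = Ker(\bH)$ and $V = Ker(\bLam^{T})$, and using $\dim Ker(\bH) = n_u - rank(\bH)$, the vanishing of the intersection gives
\[
0 \;\geq\; \bigl(n_u - rank(\bH)\bigr) + (n_u - 1) - n_u \;=\; n_u - 1 - rank(\bH)\,,
\]
so that $rank(\bH) \geq n_u - 1$.

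Finally I would relate $rank(\bH)$ back to $rank(\bA)$ through the correction term. Since $\bH = \bA - \bB\bC^{-1}\bB^{T}$ and the correction $\bB\bC^{-1}\bB^{T}$ has rank at most $rank(\bB) \leq n_\phi$, subadditivity of rank yields $rank(\bH) \leq rank(\bA) + n_\phi$. Chaining the two bounds produces $n_u - 1 \leq rank(\bA) + n_\phi$, i.e. $rank(\bA) \geq n_u - (n_\phi + 1)$, which is the assertion. The steps are individually routine; the part requiring the most care is recognizing that the Remark's \emph{unconditional} conclusion (the trivial kernel intersection, valid without any definiteness assumption on $\bH$) is precisely the bridge that turns the abstract invertibility criterion of Proposition~\ref{prop1} into a usable rank statement, together with the observation that the Schur-complement correction $\bB\bC^{-1}\bB^{T}$ can lower the rank of $\bA$ by at most $n_\phi$. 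Everything else is a dimension count.
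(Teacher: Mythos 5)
Your proof is correct and follows essentially the same route as the paper's: invoke Proposition~\ref{prop1} and its Remark to get $Ker(\bH)\cap Ker(\bLam^{T})=\{\bzero\}$, deduce $rank(\bH)\geq n_u-1$ from $\dim Ker(\bLam^{T})=n_u-1$, and then bound $rank(\bH)\leq rank(\bA)+rank(\bB\bC^{-1}\bB^{T})\leq rank(\bA)+n_\phi$ by subadditivity of rank. Your explicit check that $\bLam\neq\bzero$ and the formal subspace-dimension inequality are minor tidyings of steps the paper leaves implicit, not a different argument.
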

\begin{proof}
We use Proposition \ref{prop1} to prove this.  Since $\bK$ is invertible, we have for any $\;\br \in Ker(\bLam^{T}) \;\; \text{and} \;\; \bs^{T}\bH\:\br = 0 \;\;\;\forall \bs \in Ker(\bLam^{T})$ admits $\br = 0$  as the only solution. From Proposition~\ref{prop1} $Ker(\bH)\cap Ker(\bLam^{T}) = \Phi$, and using the fact that the $dim(Ker(\bLam^{T}))$ is $n_u - 1$, it follows that $dim(Ker(\bH))\leq 1$. Hence, the $rank(\bH)\geq n_u-1$.  Letting $\bar{\bC} = -\bC$, we have
\begin{equation}
n_u - 1 \leq \;\;rank(\bH) \leq \;\;rank(\bA) + rank(\bB\bar{\bC}^{-1}\bB^{T})\label{rank}
\end{equation}
Using the fact that rank of $(\bB\bar{\bC}^{-1}\bB^{T})$  never exceeds the rank of $(\bC^{-1})$, it follows that $rank(\bB\bar{\bC}^{-1}\bB^{T}) \leq n_{\phi}$. Subsequently, using equation~\eqref{rank}, we conclude $rank(\bA) \geq n_u - (n_{\phi} + 1)$.
\end{proof}
\begin{remark}  We note that the nature of matrix $\bA$ is completely unknown at any given iteration, and may not be full-rank, since it depends on the initial guess of root-electron-density and electrostatic potential. Further, if the finite-element discretizations are chosen such that $n_u$ is exceedingly larger than $n_\phi$, it is possible that $n_u - (n_{\phi} + 1)$ exceeds $rank(\bA)$, and the system of equations will not be invertible. If the finite-element discretization is chosen such that $n_u \leq n_\phi$, the condition $rank(\bA) \geq n_u - (n_{\phi} + 1)$ is always satisfied. But, we note that this condition is only a necessary condition for $\bK$ to be invertible. The sufficiency condition for invertibility of $\bK$, given by (i) in Proposition~\ref{prop1}, is difficult to check for any general discretization and guess of the electronic fields.
\end{remark}
\noindent Next we analyze the staggered approach of solving the discrete orbital-free DFT problem.

\subsection{Staggered Solution Procedure}
We now examine the solvability of the discrete system of equations in~\eqref{forceu} using a staggered solution scheme. In this approach, we note that the stationarity condition $\parder {L'} {\phi_i}(u_h,\phi_h) = 0$  represents the discrete version of the Poisson equation, and solve the linear system of equations~\eqref{forcephi} to obtain the electrostatic potential consistently for a given root-electron-density. Thus, the root-electron-density and the Lagrange multiplier are solved in an iterative procedure using quasi-Newton methods, while the electrostatic potential is computed consistently for every root-electron-density using linear iterative solvers. The stationarity conditions corresponding to the staggered solution procedure are given by
\begin{subequations}
\begin{align}
\parder {L'} {u_i}(u_h,\phi_h(u_h),\mu_h) &= 0 \qquad i=1 \dots n_u\,, \\
\parder {L'} {\mu_{h}} (u_h) &= 0\,,
\end{align}
\end{subequations}
which can be expressed as the following matrix equation for the incremental problem
\begin{equation}\label{stagg}
\left(\begin{array}{cc}
          {\bA'}^{(n)} & \bLam^{(n)} \\
           \bLam^{(n)^{T}} & \mathbf{0} \\
        \end{array}
      \right)
\left(\begin{array}{c}
          \delta \uhat \\
          \delta \mu \\
        \end{array}
      \right)= \left(
        \begin{array}{c}
          \bar{\fhat} \\
          h \\
        \end{array}
      \right)\,.
\end{equation}
Consider the discrete form of the lagrangian in the orbital-free DFT problem written in the following way
\begin{equation}\label{min}
\begin{split}
L'(u_{h}) =& \inf_{u_{h} \in X_{h}^{u}} C_F\intomega ({u_{h}(\bx)})^{10/3}\dx + \frac{\lambda}{2}\intomega |\del u_{h}(\bx)|^{2} \dx \\
&+\intomega \varepsilon_{xc}([u_{h}(\bx)]^2)({u_{h}(\bx)})^2 \dx  + \Phi_h(u_{h}(\bx)) + \mu_h\left(\intomega ({u_{h}(\bx)})^{2} \dx - N\right)
\end{split}
\end{equation}
where
\begin{equation}\label{poisson}
\Phi_h(u_{h}(\bx)) = -\inf_{\phi_{h} \in X_{h}^{\phi}}\left[ \frac{1}{8\pi}\intomega |\del \phi_{h}(\bx)|^2 \dx - \intomega (({u_{h}(\bx)})^2 + b)\phi_{h}(\bx) \dx \right]\,.
\end{equation}
It follows from Proposition 3.3 of~\cite{ortner} that the discrete problem~\eqref{min} has a uniform minimizer in the neighborhood of a local minimizer of the continuous functional $L'(u)$ for sufficiently small $h$, the characteristic length of the mesh. Assuming that $L'(u_{h})$ is locally convex in a neighborhood of the minimizer, the positive definiteness of the matrix $\bA'$ follows, which in turn guarantees invertibility of the system matrix. We also note that the staggered solution scheme does not impose any restrictions on $n_u$ and $n_\phi$. Further, in our numerical studies, we find the staggered solution procedure is more robust in comparison to the  simultaneous approach, both in terms of  numerical convergence of the scheme and CPU time.

\section{Estimate of Energy Error with Kernels}\label{error_kernel}
Let ($\ubarn_h$, $\phibar_h$, $\mubar_h$, $\alpj_h$, $\betj_h$) and ($\ubarn$, $\phibar$, $\mubar$, $\alpj$, $\betj$) be the solutions of the discrete finite element problem~\eqref{fem_kernel} and the continuous problem~\eqref{infsup_kernel} respectively for a given set of nuclear positions.  Kernel functionals are valid for periodic systems where ground-state electron-density $\ubarn$ is a perturbation of uniform electron gas. Hence, we assume $\ubarn$  is bounded from above and below in subsequent analysis.  The ground state energy in the discrete and the continuous formulations can be expressed as
\begin{equation}
\begin{split}
&E_h(\ubarn_h,\phibar_h,\alpj_h,\betj_h) =  \intomega\Bigl[\frac{1}{2}|\del \ubarn_h|^{2} \;+\; F(\ubarn_h) \; - \; \frac{1}{8\pi} |\del \phibar_h|^2  +  (\ubarn_h^2 \; +\; b)\phibar_h\Bigr] \dx\\
&+ \sum_{J=1}^{m}\Bigl\{\intomega \Bigl[\frac{1}{A_J\,B_J}\del\alpj_h\cdot\del\betj_h + \frac{1}{A_J}\alpj_h\betj_h  + \betj_h \,\ualp_h  +  \alpj_h\, \ubet_h  +  A_{J}\, \ualpbet_h\Bigr] \dx\Bigr\},
\end{split}
\end{equation}
and
\begin{equation}\label{contengyker}
\begin{split}
&E(\ubarn,\phibar,\alpj,\betj) =  \intomega\Bigl[\frac{1}{2}|\del \ubarn|^{2} \;+\; F(\ubarn) \; - \; \frac{1}{8\pi} |\del \phibar|^2  +  (\ubarn^2 \; +\; b)\phibar\Bigr] \dx\\
&+ \sum_{J=1}^{m}\Bigl\{\intomega \Bigl[\frac{1}{A_J\,B_J}\del\alpj\cdot\del\betj + \frac{1}{A_J}\alpj\betj  + \betj \,\ualp  +  \alpj\, \ubet  +  A_{J}\, \ualpbet\Bigr] \dx\Bigr\},
\end{split}
\end{equation}
where
\begin{equation*}
F(u) = C_F\;u^{10/3} +  \varepsilon_{xc}(u^2)u^2.
\end{equation*}
We begin by expanding $E_h(\ubarn_h,\phibar_h,\alpj_h,\betj_h)$ about the solution of the continuous problem, i.e $\ubarn_h = \ubarn + \delta u$, $\phibar_h = \phibar + \delta \phi$, $\alpj_h = \alpj + \delta \alpjn $, $\betj_h = \betj + \delta \betjn$.  Using Taylor series expansion, we get
\begin{equation}\label{taylor_ker}
\begin{split}
&E_h(\ubarn_h,\phibar_h,\alpj_h,\betj_h) =  \\
&\intomega \Bigl\{\frac{1}{2}|\del(\ubarn + \delta u)|^{2}  +  F(\ubarn + \delta u)
- \;\frac{1}{8\pi} |\del (\phibar + \delta \phi)|^{2}  +  \left[(\ubarn + \delta u)^2 +b \right]\left[\phibar + \delta \phi\right] \\
&+ \sum_{J=1}^{m}\Bigl\{ \Bigl[\frac{1}{A_J\,B_J}\left[\del(\alpj + \delta \alpjn)\cdot\del(\betj + \delta \betjn)\right] + \frac{1}{A_J}(\alpj + \delta \alpjn)(\betj + \delta \betjn) \\
 &+ (\betj + \delta \betjn)\,(\ubarn + \delta u)^{2\alpha}  +  (\alpj + \delta \alpjn)\,(\ubarn + \delta u ) ^{2\beta} +  A_{J}\, (\ubarn + \delta u)^{2(\alpha + \beta)}\Bigr]\Bigr\}\Bigr\}\dx.
 \end{split}
 \end{equation}
 Since ($\ubarn$, $\phibar$, $\mubar$, $\alpj$, $\betj$) satisfy the Euler-Lagrange equations of the functional  ~\eqref{infsup_kernel}, we have
\begin{equation*}\label{eulerker}
\begin{split}
&\intomega \Bigl\{\del \ubarn\cdot  \del \delta u +  F'(\ubarn) \delta u  + 2 \, \ubarn\; \delta u\; \phibar + \sum_{j=1}^{m} \Bigl[2\,\alpha\,\ubarn^{2\alpha - 1} \,\delta u\,\betj + 2\,\beta\,\ubarn^{2\beta - 1}\, \delta u\,\alpj \\
&+  2\,(\alpha+\beta)\,\ubarn^{2(\alpha+\beta) - 1}\, \delta u\, A_{j} \Bigr]\Bigr\}\dx  = - \intomega 2 \mubar\; \ubarn\; \delta u \dx
\end{split}
\end{equation*}
\begin{align*}
\intomega \Bigl[-\frac{1}{4\pi} \del \phibar \cdot \del \delta \phi +  (\ubarn^2 + b)\delta \phi \Bigr]\dx & = 0 \,,\\
\intomega \Bigl[\frac{1}{A_J\,B_J} \del \alpj\cdot\del\delta \betj + \frac{1}{A_J}\, \alpj \delta \betj + u^{2\alpha} \,\delta \betj \Bigr] \dx &= 0 \,,\\
\intomega \Bigl[\frac{1}{A_J\,B_J}\del \betj\cdot\del\delta \alpj + \frac{1}{A_J}\, \betj \delta \alpj  + u^{2\beta} \,\delta \alpj  \Bigr] \dx &= 0\,.
\end{align*}
Using equation ~\eqref{taylor_ker} and  the above Euler-Lagrange equations we get
\begin{equation}\label{errorkernel1}
\begin{split}
&E_h - E = \intomega \Bigl\{\frac{1}{2} |\del \delta u|^2 - 2 \mubar \ubarn\; \delta u  + \frac{1}{2} F''(\ubarn) (\delta u)^2
- \frac{1}{8\pi} |\del \delta \phi|^2 + (\delta u)^2\phibar + 2 \ubarn\; \delta u \;\delta \phi\\
&+ \sum_{J=1}^{m}\Bigl\{\frac{1}{A_J}\dalpjn \,\dbetjn + \frac{1}{A_J\,B_J}\del\dalpjn \cdot \del\dbetjn
+ 2\,\alpha(2\,\alpha - 1)\betj \,\ubarn^{2\alpha - 2} \frac{(\delta u)^2}{2}\\
& + 2\,\alpha \,\ubarn^{2\,\alpha - 1}\delta u \,\delta \betjn
+ 2\,\beta(2\,\beta - 1)\alpj \,\ubarn^{2\beta - 2} \frac{(\delta u)^2}{2} + 2\,\beta\, \ubarn^{2\,\beta - 1}\delta u \,\delta \alpjn \\
&+ 2\,(\alpha+ \beta)(2\,(\alpha+\beta) - 1)\,A_{J}\,\ubarn^{2(\alpha+\beta) - 2} \frac{(\delta u)^2}{2}\Bigr\}
\Bigr\}\dx.
\end{split}
\end{equation}
Using equations ~\eqref{multiplier} and ~\eqref{errorkernel1}, we arrive at the following bound in energy
\begin{equation}
\begin{split}
&|E_h - E| \leq \frac{1}{2} \left|\intomega |\del \delta u|^2 \dx\right| + \mubar \left|\intomega (\delta u)^2 \dx\right| + \frac{1}{2} \left|\intomega F''(\ubarn) (\delta u)^2 \dx \right|\\
&+ \frac{1}{8\pi} \left|\intomega |\del \delta \phi|^2 \dx \right|+ \left|\intomega (\delta u)^2\phibar \dx \right| + 2 \left|\intomega \ubarn\; \delta u \;\delta \phi \dx\right|
 + \sum_{J=1}^{m}\Bigl\{\left|\intomega\frac{1}{A_J}\dalpjn \,\dbetjn\dx\right | \\
 &+\left|\intomega \frac{1}{A_J\,B_J}\del\dalpjn \cdot \del\dbetjn\dx\right |
+ \left|\intomega 2\,\alpha(2\,\alpha - 1)\betj \,\ubarn^{2\alpha - 2} \frac{(\delta u)^2}{2}\dx\right|\\
 &+ \left|\intomega 2\,\alpha \,\ubarn^{2\,\alpha - 1}\delta u \,\delta \betjn\dx\right|
+ \left|\intomega 2\,\beta(2\,\beta - 1)\alpj \,\ubarn^{2\beta - 2} \frac{(\delta u)^2}{2}\dx\right| + \left|\intomega 2\,\beta\, \ubarn^{2\,\beta - 1}\delta u \,\delta \alpjn\dx\right| \\
&+ \left|\intomega 2\,(\alpha+ \beta)(2\,(\alpha+\beta) - 1)\,A_{J}\,\ubarn^{2(\alpha+\beta) - 2} \frac{(\delta u)^2}{2}\dx\right|\Bigr\}.
\end{split}
\end{equation}

We next find an optimal bound for the kernel terms involved in $|E_h - E|$.  As before, let $| \cdot |_{1,\Omega}$ represents the semi-norm in $H^{1}$ space,$\parallel\cdot \parallel_{1,\Omega}$  represents the $H^{1}$ norm, $\parallel\cdot \parallel_{0,\Omega}$  represents the standard $L^{2}$ norm, $\parallel\cdot \parallel_{0,p,\Omega}$  represents $L^{p}$ norm.
Firstly note that using H\"older inequality, we have
\begin{equation}
\begin{split}
\left|\intomega\frac{1}{A_J}\dalpjn \,\dbetjn\dx\right| &= \left|\frac{1}{A_J}\right| \left|\intomega \dalpjn \,\dbetjn\dx\right|\\
 &\leq C_1 \intomega \left| \dalpjn \,\dbetjn\right|\dx \leq C_1 \parallel \alpj - \alpj_h\parallel_{0,\Omega}\,\parallel \betj - \betj_h\parallel_{0,\Omega}
 \end{split}
\end{equation}
and
\begin{equation}
\begin{split}
\left|\intomega \frac{1}{A_J\,B_J}\del\dalpjn \cdot \del\dbetjn\dx\right| &= \left|\frac{1}{A_J\,B_J}\right|\left|\intomega \del\dalpjn \cdot \del\dbetjn\dx\right| \\
&\leq C_2 \intomega \left|\del\dalpjn \cdot \del\dbetjn\right| \dx \\
&\leq C_2 \parallel \del(\alpj - \alpj_h)\parallel_{0,\Omega}\,\parallel\del(\betj - \betj_h)\parallel_{0,\Omega}\\
&=C_2|\alpj - \alpj_h|_{1,\Omega}|\betj - \betj_h|_{1,\Omega}\,.
\end{split}
\end{equation}
Using H\"older inequality and the Sobolev inequality, we arrive at
\begin{equation}
\begin{split}
\left|\intomega 2\,\alpha(2\,\alpha - 1)\betj \,\ubarn^{2\alpha - 2} \frac{(\delta u)^2}{2}\dx\right| &\leq C _3\intomega \left|\betj \,\ubarn^{2\alpha - 2} (\delta u)^2\right|\dx \\
 &\leq C_3 \parallel \betj\,\ubarn^{2\alpha - 2} \parallel_{0,\Omega}\parallel(\ubarn - \ubarn_h)^2\parallel_{0,\Omega} \\
&= C_3 \parallel \betj \,\ubarn^{2\alpha - 2} \parallel_{0,\Omega} \parallel\ubarn - \ubarn_h\parallel_{0,4,\Omega}^{2}\\
&\leq \bar{C}_3  \parallel\ubarn - \ubarn_h\parallel^{2}_{1,\Omega}\,.
 \end{split}
\end{equation}
Further, note that
\begin{equation}
\begin{split}
\left|\intomega 2\,\alpha \,\ubarn^{2\,\alpha - 1}\delta u \,\delta \betjn\dx\right|&\leq C_4 \intomega \left|\ubarn^{2\,\alpha - 1}\delta u \,\delta \betjn \right|\dx \\
&\leq C_4\parallel\ubarn^{2\,\alpha - 1}\parallel_{0,6,\Omega}\parallel \ubarn-\ubarn_h\parallel_{0,\Omega}\parallel\betj - \betj_h\parallel_{0,3,\Omega}\\
&\leq \bar{C}_4 \parallel \ubarn-\ubarn_h\parallel_{0,\Omega}\parallel\betj - \betj_h \parallel_{1,\Omega}\,.
\end{split}
\end{equation}
  where we made use of generalized Holder inequality in the first step and Sobolev inequality in the next.
  Also one can show that
\begin{equation}
\begin{split}
\left|\intomega 2\,\beta(2\,\beta- 1)\alpj \,\ubarn^{2\beta - 2} \frac{(\delta u)^2}{2}\dx\right| &\leq C_5 \intomega \left|\alpj \,\ubarn^{2\beta- 2} (\delta u)^2\right|\dx \\
 &\leq C_5 \parallel \alpj \, \ubarn^{2\beta - 2}\parallel_{0,\Omega}\parallel(\ubarn - \ubarn_h)^2\parallel_{0,\Omega} \\
&= C_5 \parallel \alpj \, \ubarn^{2\beta - 2} \parallel_{0,\Omega} \parallel\ubarn - \ubarn_h\parallel_{0,4,\Omega}^{2}\\
&\leq \bar{C}_5  \parallel\ubarn - \ubarn_h\parallel^{2}_{1,\Omega}
 \end{split}
\end{equation}
and
\begin{equation}
\begin{split}
\left|\intomega 2\,\beta \,\ubarn^{2\,\beta - 1}\delta u \,\delta \alpjn\dx\right|&\leq C_6 \intomega \left|\ubarn^{2\,\beta - 1}\delta u \,\delta \alpjn \right|\dx \\
&\leq \bar{C}_6 \parallel \ubarn-\ubarn_h\parallel_{0,\Omega}\parallel\alpj - \alpj_h \parallel_{1,\Omega}\,.
\end{split}
\end{equation}
Using the bounds derived above, it follows that
\begin{equation}
\begin{split}
|E_h - E| &\leq C \Bigl[ \parallel\ubarn - \ubarn_h\parallel^{2}_{1,\Omega} + |\phibar - \phibar_h|^{2}_{1,\Omega} + \parallel\ubarn - \ubarn_h\parallel_{0,\Omega} \parallel\phibar - \phibar_h\parallel_{1,\Omega} \\
&+ \sum_{J=1}^{m}\Bigl\{\parallel \alpj - \alpj_h\parallel_{0,\Omega}\,\parallel \betj - \betj_h\parallel_{0,\Omega} + | \alpj - \alpj_h|_{1,\Omega}\,| \betj - \betj_h|_{1,\Omega}\\
&+\parallel \ubarn-\ubarn_h\parallel_{0,\Omega}\parallel\alpj - \alpj_h \parallel_{1,\Omega} + \parallel \ubarn-\ubarn_h\parallel_{0,\Omega}\parallel\betj - \betj_h \parallel_{1,\Omega}\Bigr\}\Bigr]\,.
\end{split}
\end{equation}

 Now in the neighbourhood of the solution ($\ubarn,\phibar,\mubar,\alpj,\betj$), we bound the above estimates with the interpolation error and then bound them with  finite-element mesh size in the similar lines of Section \ref{meshadapt}.
 \begin{subequations}
\begin{gather}
|\alpj - \alpj_h|_{1,\Omega} \leq \bar{C} |\alpj - \alpj_I|_{1,\Omega}\leq \tilde{C}\sum_{e} h_e^{k}|\alpj|_{k+1,\Omega_e}\,,\\
\parallel\alpj - \alpj_h\parallel_{0,\Omega} \leq \bar{C_0} \parallel\alpj- \alpj_I\parallel_{0,\Omega} \leq \tilde{C}_0 \sum_{e} h_e^{k+1}|\alpj|_{k+1,\Omega_e}\,,\\
|\betj - \betj_h|_{1,\Omega} \leq  \bar{\bar{C}} |\betj - \betj_I|_{1,\Omega}\leq \tilde{\tilde{C}} \sum_{e} h_{e}^{k}|\betj|_{k+1,\Omega_e}\,,\\
\parallel\betj - \betj_h\parallel_{0,\Omega} \leq \bar{\bar{C_0}}\parallel\betj- \betj_I\parallel_{0,\Omega} \leq \tilde{\tilde{C_0}}\sum_{e} h_{e}^{k+1}|\betj|_{k+1,\Omega_e}\,.
\end{gather}
\end{subequations}
Hence the error estimate in energy is given by
\begin{equation}\label{errorkernelfinal}
\begin{split}
&|E_h - E| \leq \mathcal{C}\sum_{e}\Bigl[ h_e^{2k}|\ubarn|_{k+1,\Omega_e}^{2}+ h_{e}^{2k}|\phibar|_{k+1,\Omega_e}^{2} + h_e^{2k+1}|\ubarn|_{k+1,\Omega_e}|\phibar|_{k+1,\Omega_e} \\
&+ \sum_{J=1}^{m}\Bigl\{h_{e}^{2k+2}|\alpj|_{k+1,\Omega_e}|\betj|_{k+1,\Omega_e} + h_{e}^{2k}|\alpj|_{k+1,\Omega_e}|\betj|_{k+1,\Omega_e}\\
 &+ h_{e}^{2k+1}|\ubarn|_{k+1,\Omega_e}|\betj|_{k+1,\Omega_e}+h_{e}^{2k+1}|\ubarn|_{k+1,\Omega_e}|\alpj|_{k+1,\Omega_e}\Bigr\}\Bigr]\,.
\end{split}
\end{equation}
The error estimate to $O(h^{2k+1})$ is therefore given by
\begin{equation}
|E_h - E| \leq \mathcal{C}\sum_{e}\Bigl[ h_{e}^{2k}|\ubarn|_{k+1,\Omega_e}^{2}+ h_{e}^{2k}|\phibar|_{k+1,\Omega_e}^{2}  + \sum_{J=1}^{m}\Bigl\{ h_{e}^{2k}|\alpj|_{k+1,\Omega_e}|\betj|_{k+1,\Omega_e}\Bigr\}\Bigr]
\end{equation}

\newpage

\end{document}